\definecolor{darkgreen}{rgb}{0.0,0,0.9}
\newcommand*{\suppress}[1]{}
\newcommand*{\abs}[1]{\lvert#1\rvert}
\def\thm@space@setup{%
	\thm@preskip= 10pt
	\thm@postskip=\thm@preskip 
}
\renewcommand{\paragraph}{%
	\@startsection{paragraph}{4}%
	{\z@}{5pt}{-1em}%
	{\normalfont\normalsize\bfseries}%
}
\newtheorem{theorem}{Theorem}
\newtheorem{lemma}{Lemma}
\newtheorem{proposition}{Proposition}
\theoremstyle{definition}
\newenvironment{fminipage}%
{\begin{Sbox}\begin{minipage}}%
		{\end{minipage}\end{Sbox}\fbox{\TheSbox}}
\def\abs#1{\left|#1  \right|}
\newcommand\ff{\boldsymbol{\mathit{f}}}
\newcommand\ww{\boldsymbol{\mathit{w}}}
\newcommand\yy{\boldsymbol{\mathit{y}}}
\newcommand\xx{\boldsymbol{\mathit{x}}}
\newcommand{\Mc}{\mathcal{M}}
\newcommand{\Mcc}{\mathcal{M'}}
\newcommand{\Lc}{\mathcal{L}}
\newcommand{\Lcc}{\mathcal{L'}}
\newcommand{\ST}{\text{s.t.}}
\newcommand{\ffbar}{\overline{\ff}}
\newcommand{\Sbar}{\overline{S}}
\newcommand{\Sbarc}{\overline{S'}}
\newcommand{\fbar}{\overline{f}}
\newcommand{\Pibar}{\overline{\Pi}}
\title{A Natural Generalization of Stable Matching\\ 
Solved via New Insights into Ideal Cuts}
\author[1]{Tung Mai}
\author[2]{Vijay V.~Vazirani}
\affil[1]{Georgia Tech}
\affil[2]{University of California, Irvine}
\date{}
\begin{document}
	\maketitle
    
	\begin{abstract}
We study a natural generalization of stable matching to the maximum weight stable matching problem
and we obtain a combinatorial polynomial time algorithm for it by reducing it to the 
problem of finding a maximum weight ideal cut in a DAG. We give the first polynomial time 
algorithm for the latter problem; this algorithm is also combinatorial.

The combinatorial nature of our
algorithms not only means that they are efficient but also that 
they enable us to obtain additional structural and algorithmic results:
\begin{itemize}
	\item We show that the set, $\Mcc$, of maximum weight stable matchings forms a sublattice
$\Lcc$ of the lattice $\Lc$ of all stable matchings $\Mc$.
\item
We give an efficient algorithm for finding boy-optimal and girl-optimal matchings in $\Lcc$.
\item
We generalize the notion of rotation, a central structural notion in the context of the 
stable matching problem, to {\em meta-rotation}. Just as rotations help traverse the lattice
$\Lc$, meta-rotations help traverse the sublattice $\Lcc$.
\end{itemize}

\end{abstract}

\section{Introduction}

The two problems of stable matching and cuts in graphs were introduced in the seminal papers
of Gale and Shapley (1962) \cite{GaleS} and Ford and Fulkerson (1956) \cite{FF},
respectively. Over the decades, remarkably deep and elegant theories have emerged around both
these problems which include highly sophisticated efficient 
algorithms, not only for the basic problems but also several generalizations and variants,
that have found numerous applications \cite{Knuth-book,GusfieldI,Manlove-book,Sch-book}.

In this paper, we study a natural 
generalization of stable matching to the maximum weight stable matching problem
and we obtain an efficient combinatorial algorithm for it; we
remark that the linear programming formulation of stable matching can be used to show that the
weighted version is in P, see Section \ref{sec.related}. 
Our algorithm is obtained by reducing this problem to the problem
of finding a maximum weight ideal cut in a DAG. We give the first polynomial time algorithm
for the latter problem; this algorithm is also combinatorial. The combinatorial nature of our
algorithms not only means that they are efficient but also that 
they enable us to obtain additional structural and algorithmic results:
\begin{itemize}
	\item We show that the set, $\Mcc$, of maximum weight stable matchings forms a 
sublattice $\Lcc$ of the stable matching lattice $\Lc$.
\item
We give an efficient algorithm for finding boy-optimal and girl-optimal matchings in $\Mcc$.
\item
We generalize the notion of rotation, a central structural notion in the context of the 
stable matching problem, to {\em meta-rotation}. Analogous to the way rotations help traverse 
the lattice $\Lc$, meta-rotations help traverse the sublattice $\Lcc$.
\end{itemize}

The maximum weight stable matching problem has several applications, e.g., the 
{\em egalitarian stable matching problem} defined 
in section \ref{sec.related}. Another application is: given a set $D$ of desirable boy-girl 
pairs and a set $U$ of undesirable pairs, find a stable matching that simultaneously 
maximizes the number of pairs in $D$ and minimizes the number of pairs in $U$. This reduces to
our problem by assigning each pair in $D$ a weight of 1 and each pair in $U$ a weight of $-1$.

Our results are based on deep properties of rotations and the manner in which closed sets 
in the rotation poset $\Pi$ yield stable matchings in the lattice $\Lc$.

\subsection{Problem definitions}

Let $I$ denote an instance of the stable matching problem over sets $B$ and $G$ of
$n$ boys and $n$ girls, respectively. Let $\ww$ be a
weight function $\ww: \ B \times G \rightarrow \mathbb{Q}$. Then $(I, w)$ defines an instance
of the {\em maximum weight stable matching problem}; it asks for a stable matching of instance
$I$, say $M$, that maximizes the objective function 
$ \sum_{bg \in M} w_{bg} .$

\suppress{
In the {\em bi-objective stable matching problem} we are given sets $B$ and $G$, of $n$ boys and
$n$ girls and, for each boy and each girl,
a complete preference ordering over all agents of the opposite sex. \
However, unlike the previous problem,
we are given two weight functions
$\ww^{(1)},\ww^{(2)}: \ B \times G \rightarrow \mathbb{R}$. The problem is to find a stable matching $M$ that 
maximizes 
$ \sum_{bg \in M} w^{(2)}_{bg} $
among the ones maximizing  
$\sum_{bg \in M} w^{(1)}_{bg}.$
}

In the {\em maximum weight ideal cut problem} we are given a directed acyclic graph $G = (V,E)$ 
with a source $s$ and a sink $t$ such that for each $v \in V$, there is a path from $s$ to $v$ and 
a path from $v$ to $t$. We are also given a weight $w_{uv} \in \mathbb{Q}$ for each edge
$(u, v) \in E$. An {\em ideal cut} is a partition of the vertices into sets 
$S$ and $\overline{S} = V(G) \setminus S$ such that $s \in S$ and $t \in \overline{S}$ and
there is no edge $uv \in E$ with $u \in \overline{S}$ and $v \in S$. 
We remark that such a set $S$ is also called a {\em closed set}.
The weight of the ideal cut $(S, \overline{S})$ is defined to be sum of weights of all edges 
crossing the cut i.e., $\sum_{uv: u \in S, v \not \in S} w_{uv}$.
The problem is to find an ideal cut of maximum weight.

\subsection{Overview of results and technical ideas}

We start by giving an LP formulation for the problem of finding a maximum weight ideal cut in
an edge-weighted DAG, $G$; we note that the weights can be positive as well as negative. We go on to
showing that this LP always has integral optimal solutions, hence showing that the problem is
in P (Proposition~\ref{thm:cutTime}). 
We next study the polytope obtained from the constraints of this LP (Theorem~\ref{thm.ideal}). 
We first show that the set of vertices of this polytope
is precisely the set of maximum weight ideal cuts in the DAG $G$. For this reason, we
call this the ideal cut polyhedron.
Next we characterize the edges of this polyhedron: we show that two cuts $(S, \Sbar)$ and
$(S', \Sbarc)$ are adjacent in the polyhedron iff $S \subset S'$ or $S' \subset S$.

We then study the dual of this LP. We interpret it as solving a special kind of $s$-$t$ flow 
problem in $G$ in which 
the flow on each edge has to be a least the capacity of the edge and the objective is to minimize 
the flow from $s$ to $t$. We show how to solve this flow problem combinatorially in polynomial 
time (Proposition~\ref{prop:flowAlg}).
Next, we define the notion of a residual graph for our flow problem. After finding an optimal
flow, the srongly connected components in the residual graph are shrunk to give an
unweighted DAG $D$. We show that ideal cuts in $D$ correspond to maximum weight ideal cuts in $G$ 
(Theorem~\ref{thm.max-wt}). 

We also show that the set of maximum weight ideal cuts in $G$ forms a lattice under the operations 
of set union and intersection (Theorem \ref{thm.max-wt}).

Finally, we move on to our main problem of finding a maximum weight stable matching.
We start by showing that the set $\Mcc$ of such matchings forms a sublattice $\Lcc$ of the 
lattice $\Lc$ of all stable matchings (Theorem~\ref{thm:sublattice}). 

We then give what can be regarded as the main result of our paper: a reduction from this problem 
to the problem of finding a maximum weight ideal cut in an edge-weighted DAG $G$ 
(Section \ref{sec:reduction}). This reduction goes
deep into properties of rotations and the rotation poset $\Pi$. Closed sets of $\Pi$ are
in one-to-one correspondence with the stable matchings in the lattice $\Lc$. 
In particular, if matching $M$ corresponds to closed set $S$, then starting from the boy-optimal
matching in lattice $\Lc$ we will reach matching $M$ by applying the set of rotations in $S$.

Let $R$ be the set of rotations used in $\Pi$.
We add new vertices $s$ and $t$ to $\Pi$; $s$ dominates all remaining vertices and $t$ is dominated
 by all remaining vertices. This yields the DAG $G$. The next task is to 
assign appropriate weights to the edges of $G$; this is done by using properties of
rotations. Finally, let $(S, \Sbar)$ be a maximum weight ideal cut in weighted DAG $G$, and let 
$M$ be the matching arrived at by starting from the boy-optimal
matching in lattice $\Lc$ and applying the set of rotations in $S$. Then let us say that $M$
corresponds to $S$. We show that in fact this is a one-to-one correspondence between 
maximum weight ideal cuts in $G$ and maximum weight stable matchings for the given instance
(Theorem~\ref{thm:corresponding}).

Recall the definition of (unweighted) DAG $D$ given above which was obtained from the
edge-weighted DAG $G$. As stated above, ideal cuts in the $D$ correspond to maximum weight 
ideal cuts in $G$, and hence to maximum weight stable matchings, $\Mcc$.
A vertex in $D$ corresponds to a set of vertices in $G$, and these sets form a partition of the 
set of rotations $R$. We call these sets {\em meta-rotations}. As stated above, meta-rotations
help traverse the sublattice $\Lcc$ in the same way that rotations help traverse 
the lattice $\Lc$ (Theorem \ref{thm.comb}).

\subsection{Related results}
\label{sec.related}

The works of Vande Vate \cite{Vate89} and Rothblum \cite{Rothblum92} give a linear program of
polynomial size whose vertices, also called basic feasible solutions, are precisely the set 
of stable matchings of
the given instance. Therefore, the maximum weight stable matching problem can clearly be
solved in polynomial time. As stated above, the main point of our paper is to 
obtain a combinatorial polynomial time algorithm for this problem. 

In 1987, Irving et. al. \cite{ILG87} gave a combinatorial polynomial time algorithm for the following
problem which arose in the context of obtaining an {\em egalitarian stable matching} which,
unlike the matching produced by the Gale-Shapley procedure, favors neither boys nor girls.
Each boy $b_i$ provides a preference weight $p(b_i, g_j)$ for each girl $j$ and
similarly, each girl $g_i$ provides a preference weight $p(g_i, b_j)$ for each boy $j$.
By ordering these weights, we get the preference orders for each boy and each girl. 
The problem is to find a matching that is stable under these preference orderings, say
$(b_1, g_1), (b_2, g_2), \ldots , (b_n, g_n)$, such that it maximizes (or minimizes)
$(\sum_i {p(b_i, g_i)} + \sum_i {p(g_i, b_i)})$. Clearly, this is a special case of our problem.

It is a well known that finding a minimum weight ideal cut reduces in a straightforward manner
to the minimum $s$-$t$ cut problem. However, to the best of our knowledge, a polynomial time
algorithm for maximum weight ideal cut was not known before. In particular, the reduction 
mentioned above does not give it, since the maximum $s$-$t$ cut problem is NP-hard.

Our results on the ideal cut polyhedron are motivated by results reported in \cite{GV96}.
The latter paper showed that the polyhedron obtained from the dual of the maximum $s$-$t$
flow LP does not capture all minimum $s$-$t$ cuts as vertices. They also characterized
edges of this polyhedron (it is similar to the one we obtained for ideal cuts). Then they 
went on to giving a different polyhedron whose vertices are in one-to-one correspondence with
all minimum $s$-$t$ cuts.

Our procedure of going from the edge-weighted DAG $G$ to the unweighted DAG $D$
(Theorem~\ref{thm.max-wt}) follows from the work of Picquard and Queyranne \cite{PQ80}.
Their procedure yields a DAG after performing 
max $s$-$t$ flow, computing the residual graph and shrinking strongly connected components.
As shown in \cite{PQ80}, ideal cuts in this DAG correspond to min $s$-$t$ cuts in the original
graph.

    \section{Preliminaries}

\subsection{The stable matching problem}
The stable matching problem takes as input a set of boys $B = \{b_1, b_2, \ldots , b_n\}$ and a set of girls $G = \{g_1, g_2, \ldots , g_n\}$; each person has a complete preference ranking over the set of opposite sex. 

A matching $M$ is a one-to-one correspondence between $B$ and $G$. For each pair $bg \in M$, $b$ is called the partner of $g$ in $M$ (or $M$-partner) and vice versa. 
For a matching $M$, a pair $bg \not \in M$ is said to be \emph{blocking} if $b$ is below $g$ and $g$ is below $b$, i.e., they prefer each other to their partners. A matching $M$ is \emph{stable} if there is no blocking pair in $M$.

\subsection{The lattice of stable matchings}

Let $M$ and $M'$ be two stable matchings. We say that $M$ \emph{dominates} $M'$, denoted by 
$M \preceq M'$, if every boy weakly prefers his partner in $M$ to $M'$. It is well known that 
the dominance partial order over the set of stable matchings forms a 
distributive lattice \cite{GusfieldI}, with meet and join defined as follows.
The {\em meet} of $M$ and $M'$, $M \wedge M'$, 
is defined to be the matching that results when each boy chooses his more preferred partner 
from $M$ and $M'$; it is easy to show that this matching is also stable.
The {\em join} of $M$ and $M'$, $M \vee M'$, 
is defined to be the matching that results when each boy chooses his less preferred partner 
from $M$ and $M'$; this matching is also stable. These operations distribute, i.e.,
given three stable matchings $M, M', M''$,
$$ M \vee (M' \wedge M'') = (M \wedge M') \vee (M \wedge M'') \ \ \mbox{and} \ \
M \wedge (M' \vee M'') = (M \vee M') \wedge (M \vee M'') .$$

In this paper, we will denote this lattice by $\Lc$.
It is easy to see that $\Lc$ must contain a matching, $M_0$, that dominates all others
and a matching $M_z$ that is dominated by all others.
$M_0$ is called the \emph{boy-optimal matching}, since in it, each boy is matched to his most
favorite girl among all stable matchings. This is also the {\em girl-pessimal matching}.
Similarly, $M_z$ is the {\em boy-pessimal} and \emph{girl-optimal matching}.

\subsection{Rotations help traverse the lattice}

A crucial ingredient needed to understand the structure of stable matchings is the notion of 
a rotation, which was defined by Irving \cite{Irving} and studied in detail in \cite{Irving2}. 
A rotation takes $r$ matched pairs in a fixed order, say 
$\{b_0g_0, b_1g_1,\ldots, b_{r-1}g_{r-1}\}$ and ``cyclically'' changes the mates of these $2r$ 
agents, as defined below, to arrive at another stable matching. Furthermore, it represents a minimal
set of pairings with this property, i.e, if a cyclic change is applied on any subset of these 
$r$ pairs, with any ordering, then the resulting matching has a blocking pair and is not stable.
After rotation, the boys' mates weakly worsen and the girls' mates weakly improve. Thus one can 
go from $M_0$ to $M_z$ by applying a suitable sequence of rotations (specified by the rotation poset
defined below). Indeed, this is precisely the purpose of rotations.

Let $M$ be a stable matching. For a boy $b$ let $s_M(b)$ denote the first girl $g$ on $b$'s list such that $g$ strictly prefers $b$ to her $M$-partner. Let $next_M(b)$ denote the partner in $M$ of girl $s_M(b)$. A \emph{rotation} $\rho$ \emph{exposed} in $M$ is an ordered list of pairs $\{b_0g_0, b_1g_1,\ldots, b_{r-1}g_{r-1}\}$ such that for each $i$, $0 \leq i \leq r-1$, $b_{i+1}$ is $next_M(b_i)$, where $i+1$ is taken modulo $r$. In this paper, we assume that the subscript is taken modulo $r$ whenever we mention a rotation. Notice that a rotation is cyclic and the sequence of pairs can be rotated. $M / \rho$ is defined to be a matching in which each boy not in a pair of $\rho$ stays matched to the same girl and each boy $b_i$ in $\rho$ is matched to $g_{i+1} = s_M(b_i)$. It can be proven that $M / \rho$ is also a stable matching. The transformation from $M$ to $M / \rho$ is called the \emph{elimination} of $\rho$ from $M$.

Let $\rho = \{b_0g_0, b_1g_1,\ldots, b_{r-1}g_{r-1}\}$ be a rotation. We say that $\rho$ \emph{moves $b_i$ from $g_i$ to $g_{i+1}$}, and it \emph{moves $g_i$ from $b_{i}$ to $b_{i-1}$},
for $0 \leq i \leq r-1$.


\subsection{The rotation poset}

A rotation $\rho'$ is said to precede (or dominate) another rotation $\rho$, denoted by $\rho' \prec \rho$, if $\rho'$ is eliminated in every sequence of eliminations from $M_0$ to a stable matching in which $\rho$ is exposed. Thus, the set of rotations forms a partial order via
this precedence relationship. The partial order on rotations is called \emph{rotation poset} and denoted by $\Pi$.

\begin{lemma}[\cite{GusfieldI}, Lemma 3.2.1]
	\label{lem:pre2}
	For any boy $b$ and girl $g$, there is at most one rotation that moves $b$ to $g$ or $g$ from $b$. Moreover, if $\rho_1$ moves $b$ to $g$ and $\rho_2$ moves $b$ from $g$ then $\rho_1 \prec \rho_2$.
\end{lemma}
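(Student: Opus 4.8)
My plan uses two ingredients: (a) the standard correspondence between stable matchings and closed sets (order ideals) of $\Pi$ --- writing $S(M)$ for the closed set whose rotations, eliminated from $M_0$ in any order extending $\Pi$, yield $M$, one has $M \preceq M'$ iff $S(M) \subseteq S(M')$, and a closed set with a maximal element removed is again closed; and (b) the elementary fact that $s_M(b)$ is strictly less preferred by $b$ than $b$'s partner in $M$ (otherwise $b$ and $s_M(b)$ block $M$). Fact (b) says that along any elimination sequence a boy's partner never improves and strictly worsens exactly at the steps whose rotation contains him. I write $p_M(b)$ for $b$'s partner in $M$ and $g_1 >_b g_2$ for ``$b$ prefers $g_1$ to $g_2$''.

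First I would fix a pair $bg$ belonging to at least one stable matching; otherwise the statement is vacuous, since a rotation moving $b$ to $g$ creates the pair $bg$ and one moving $b$ from $g$ contains (hence destroys) it. Let $\Mc_{bg}$ be the set of stable matchings containing $bg$. Under $\wedge$ and $\vee$ boy $b$ keeps $g$ in both cases, so $\Mc_{bg}$ is a sublattice of $\Lc$ and hence has a minimum $\alpha$ and a maximum $\beta$. By fact (b), any $M$ with $\alpha \preceq M \preceq \beta$ has $p_M(b)$ sandwiched between $p_\alpha(b)=g$ and $p_\beta(b)=g$, so $p_M(b)=g$; thus $\Mc_{bg}$ is exactly the interval $[\alpha,\beta]$, i.e.\ the $bg$-matchings correspond to the closed sets $S$ with $S_\alpha \subseteq S \subseteq S_\beta$, where $S_\alpha := S(\alpha)$ and $S_\beta := S(\beta)$.

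Next I would pin down the two rotations. If $\rho$ moves $b$ to $g$ and is exposed in $M$, then $M/\rho \in \Mc_{bg}$ gives $S_\alpha \subseteq S(M)\cup\{\rho\}$, while $p_M(b) >_b s_M(b)= g = p_\alpha(b)$ forces $\alpha \not\preceq M$, i.e.\ $S_\alpha \not\subseteq S(M)$; hence $\rho \in S_\alpha$, and since $\rho\notin S(M)$ we get $S_\alpha\setminus\{\rho\}=S_\alpha\cap S(M)$, a closed set, so $\rho$ is a maximal element of $S_\alpha$. Dually, a rotation $\rho'$ moving $b$ from $g$ contains the pair $bg$, satisfies $\rho'\notin S_\beta$ (eliminating it pushes $b$ strictly below $g$), and has all its predecessors inside $S_\beta$, so $S_\beta\cup\{\rho'\}$ is closed and $\rho'$ is exposed in $\beta$. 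Uniqueness now follows from the fact that a rotation exposed in a matching $N$ is a cycle of the functional digraph $b \mapsto next_N(b)$, so at most one rotation exposed in $N$ contains a given boy: any two rotations moving $b$ from $g$ are both exposed in $\beta$ and both contain $b$, hence are equal; and if two distinct rotations moved $b$ to $g$, both would be maximal in $S_\alpha$, so deleting both from $S_\alpha$ would leave a closed set whose matching has both rotations exposed and both containing $b$ --- impossible.

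For the ``moreover'', let $\rho_1$ move $b$ to $g$ and $\rho_2$ move $b$ from $g$. From the above, $\rho_1 \in S_\alpha \subseteq S_\beta$ while $\rho_2 \notin S_\beta$, so $\rho_1 \neq \rho_2$. Take any closed set $S$ with $\rho_2 \in S$ and eliminate its rotations in an order extending $\Pi$; just before $\rho_2$ is removed the current matching $M'$ still contains the pair $bg$ (since $bg \in \rho_2$), so $M' \in \Mc_{bg}$, whence $\alpha \preceq M'$, i.e.\ $S_\alpha \subseteq S(M') \subseteq S$, so $\rho_1 \in S$. Thus every closed set containing $\rho_2$ contains $\rho_1$, which is precisely $\rho_1 \prec \rho_2$. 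I expect the main obstacle to be the uniqueness step for rotations moving $b$ to $g$: passing rigorously from ``$\rho \in S_\alpha$'' to ``$\rho$ is maximal in $S_\alpha$'', and then correctly arguing that two such maximal rotations coincide; the $b$-from-$g$ case and the precedence claim are comparatively routine once the interval description $\Mc_{bg}=[\alpha,\beta]$ is in hand.
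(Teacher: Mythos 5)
The paper does not prove this lemma; it cites it directly from Gusfield--Irving (their Lemma 3.2.1), so there is no in-paper argument to compare against. Judged on its own terms, your derivation is logically sound: the interval description $\Mc_{bg} = [\alpha,\beta]$ is correct, the identification of any rotation moving $b$ to $g$ as a maximal element of $S_\alpha$ is rigorous, the dual identification of any rotation moving $b$ from $g$ as exposed in $\beta$ works (you gloss over why all of $\rho'$'s predecessors lie in $S_\beta$, but the reason is immediate: $\rho'$ is exposed in some matching $M'$ whose pairs include $bg$, so $M' \preceq \beta$ and $S(M') \subseteq S_\beta$), the disjoint-cycles-of-$next_N$ argument gives uniqueness, and the ``moreover'' step via an arbitrary closed set containing $\rho_2$ is clean.

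The substantive concern is one of circularity in the choice of foundation rather than internal logic. Your ingredient (a) --- the bijection between stable matchings and closed subsets of $\Pi$, with $M \preceq M'$ iff $S(M) \subseteq S(M')$, together with the companion fact that the rotations exposed in $M$ are exactly the minimal elements of $\Pi \setminus S(M)$ --- is, in Gusfield and Irving's book (and in every development I am aware of), established \emph{after} and \emph{by means of} Lemma 3.2.1. That lemma is proved there by a direct, local analysis of a single run of the minimal-differences algorithm, and it is precisely what allows one to show that different elimination sequences reaching the same matching eliminate the same rotation set, which is the heart of the closed-set correspondence. So your proof is a valid consequence of the global poset structure, and it is a nice observation that 3.2.1 can be read off from that structure, but it cannot serve as a foundational proof of the lemma: it invokes machinery that presupposes the statement. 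If a self-contained proof is wanted, ingredient (a) must be replaced by a direct argument about elimination sequences; as written, what you have is a consistency check rather than an independent derivation.
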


A \emph{closed subset} is a subset of the poset such that if an element is in the subset then all of its predecessors are also included. There is a one-to-one relationship between the stable matchings and the closed subsets of $\Pi$. Given a closed subset $C$, the correponding matching $M$ is found by eliminating the rotations starting from $M_0$ according to the topological ordering of the elements in the subset. We say that $C$ \emph{generates} $M$.

\begin{lemma} [\cite{GusfieldI}, Lemma 3.3.2]
	\label{lem:computePoset}
	$\Pi$ contains at most $O(n^2)$ rotations and can be computed in polynomial time.
\end{lemma}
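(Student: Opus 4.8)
The statement has two parts --- a counting bound on $|\Pi|$ and a polynomial-time construction --- and I would prove them in that order. For the bound, Lemma~\ref{lem:pre2} does almost all of the work. Given a rotation $\rho=\{b_0g_0,\dots,b_{r-1}g_{r-1}\}$, its $2r$ agents are distinct, and $\rho$ moves $b_i$ from $g_i$ for every $i$, so we may associate to $\rho$ the set of $r$ pairs $\{(b_i,g_i)\}_{i}\subseteq B\times G$. By Lemma~\ref{lem:pre2}, no pair $(b,g)$ is moved ``from'' by two distinct rotations, so the pair-sets associated with distinct rotations are pairwise disjoint; hence $\sum_{\rho\in\Pi}|\rho|\le |B\times G|=n^2$. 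A one-pair ``rotation'' $\{b_0g_0\}$ is impossible, since it would require $next_M(b_0)=b_0$, i.e.\ $s_M(b_0)=g_0$, contradicting that $g_0$ is already $b_0$'s $M$-partner and so does not strictly prefer $b_0$ to her partner; thus $|\rho|\ge 2$ for every rotation, and therefore $|\Pi|\le n^2/2 = O(n^2)$.

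For the construction, I would first compute the boy-optimal matching $M_0$ (and the girl-optimal matching $M_z$) by the Gale--Shapley procedure. To list all rotations, run a maximal sequence of eliminations starting at $M_0$: while the current matching $M$ differs from $M_z$, find a rotation exposed in $M$ by following the pointers $b\mapsto next_M(b)$ from a suitable boy until they cycle, record this rotation, and eliminate it. The key claim to verify is that this enumerates \emph{every} rotation and halts after exactly $|\Pi|$ steps: using the one-to-one correspondence between closed subsets of $\Pi$ and stable matchings, the set of rotations eliminated so far is always a closed subset, a rotation is exposed in the matching it generates precisely when that rotation is a minimal element of the complement, and a proper closed subset always has such a minimal element in its complement --- so the procedure cannot stall before reaching $M_z$, at which point the full set $\Pi$ has been produced. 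Each elimination, and each search for an exposed rotation, costs polynomial time, and there are only $O(n^2)$ of them.

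It remains to recover the partial order $\prec$ on the rotations found. Once every rotation together with the boys and girls it moves is known, the covering relations of $\Pi$ can be described combinatorially: one family is exactly the one furnished by Lemma~\ref{lem:pre2} (if $\rho_1$ moves $b$ to $g$ and $\rho_2$ moves $b$ from $g$, then $\rho_1\prec\rho_2$), and there is a second family governed by a symmetric condition on the girls moved by two rotations together with the relative positions of the relevant girls on a boy's preference list; both families have $O(n^2)$ members and can be enumerated in polynomial time, after which a transitive closure yields $\prec$. I expect this last point --- proving that these two combinatorial families are \emph{precisely} the covering relations of $\Pi$ --- to be the main obstacle, since it requires a two-sided argument about exactly which rotations must be eliminated before a given rotation can become exposed; by contrast the $O(n^2)$ bound is immediate from Lemma~\ref{lem:pre2}, and the termination of the elimination procedure follows quickly from the closed-set correspondence.
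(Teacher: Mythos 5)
The paper does not prove this lemma; it is cited verbatim from Gusfield and Irving's monograph \cite{GusfieldI}, so there is no in-paper proof to compare against. That said, your reconstruction is sound and follows the standard route from the cited reference.

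Two small remarks. For the counting bound, your use of Lemma~\ref{lem:pre2} is correct, but note that the lemma as stated bounds rotations that move ``$b$ to $g$ or $g$ from $b$''; the part you invoke is the ``moves $g$ from $b$'' clause, which is indeed the same event as ``moves $b$ from $g$'' since both mean the pair $bg$ belongs to the rotation --- worth being explicit so the reader sees the clause being applied. Your $|\rho|\ge 2$ observation then gives $|\Pi|\le \binom{n}{2}$-ish, which is the bound Gusfield and Irving actually prove, tighter than the stated $O(n^2)$.

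For the poset construction, your enumeration-by-elimination and its termination argument via the closed-set correspondence are exactly right. The genuinely delicate step, as you anticipate, is verifying that the two combinatorial rules (the one from Lemma~\ref{lem:pre2} and the second ``girl-side'' rule) generate precisely the precedence order $\prec$: one must show both that each rule implies precedence and, conversely, that every covering relation in $\Pi$ is witnessed by one of the two rules. Your sketch names this as the main obstacle but leaves it at the level of a plan; a complete proof would need the two-sided argument you allude to, which is the content of Gusfield--Irving Lemma 3.3.2 and the surrounding discussion. Since the paper takes this as a black box, deferring to the reference there is appropriate.
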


    \section{Maximum Weight Ideal Cuts: IP, LP and Polyhedron}
\label{sec:cuts}
In this section, we show how to find a maximum weight ideal cut using linear programming. 
We also prove some characteristics of the solution set and define a polyhedron whose vertices
are precisely the ideal cuts.

\subsection{A linear program for maximum weight ideal cut}
 

Consider the following integer program which has a variable $y_v$ for each vertex $v$ of
DAG $G = (V, E)$:
\begin{equation}
\label{ip}
\begin{aligned}
\max  &~~~ \sum_{uv \in E} w_{uv} \left( y_v - y_u \right)  \\
\ST  &~~~ y_v \geq y_u &\quad &\forall e = uv \in E \\
&~~~ y_t = 1 \\
&~~~ y_s = 0 \\
&~~~ y_v \in \{0,1\} &\quad &\forall v \in V.
\end{aligned}
\end{equation}

\begin{lemma} \label{lem:ipCorrectness}
	An optimal solution to (\ref{ip}) is a maximum weight ideal cut in $G$.
\end{lemma}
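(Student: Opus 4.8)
The plan is to show that the integer program (\ref{ip}) is an exact reformulation of the maximum weight ideal cut problem, by exhibiting a weight-preserving bijection between feasible solutions of (\ref{ip}) and ideal cuts of $G$. First I would take any feasible $\{0,1\}$-vector $y$ and define $S = \{v \in V : y_v = 0\}$ and $\Sbar = \{v \in V : y_v = 1\}$. The constraint $y_s = 0$ gives $s \in S$, the constraint $y_t = 1$ gives $t \in \Sbar$, so $(S, \Sbar)$ is a genuine partition with the endpoints on the correct sides. The constraints $y_v \geq y_u$ for every edge $uv \in E$ say exactly that there is no edge $uv$ with $y_u = 1$ and $y_v = 0$, i.e., no edge from $\Sbar$ into $S$; hence $(S,\Sbar)$ is an ideal cut. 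Conversely, given an ideal cut $(S,\Sbar)$, setting $y_v = 0$ for $v \in S$ and $y_v = 1$ for $v \in \Sbar$ yields a feasible solution of (\ref{ip}) by the same equivalences read backwards. So feasible points of (\ref{ip}) and ideal cuts of $G$ are in bijection.

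Next I would check that this bijection preserves the objective. For an edge $uv \in E$, since $y$ is $\{0,1\}$-valued and $y_v \geq y_u$, the term $y_v - y_u$ is $1$ precisely when $y_u = 0$ and $y_v = 1$, i.e.\ when $u \in S$ and $v \in \Sbar$, and is $0$ otherwise. Therefore $\sum_{uv \in E} w_{uv}(y_v - y_u) = \sum_{uv : u \in S, v \notin S} w_{uv}$, which is exactly the weight of the ideal cut $(S,\Sbar)$ as defined in the introduction. Since the objective values match term by term under the bijection, maximizing the objective of (\ref{ip}) over feasible $y$ is the same as maximizing the weight over ideal cuts, so an optimal solution of (\ref{ip}) corresponds to a maximum weight ideal cut, and vice versa.

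There is essentially no hard step here; the only things to be slightly careful about are: (i) that the hypotheses on $G$ (every vertex lies on some $s$-$t$ path) are not needed for this particular lemma but guarantee the problem is non-degenerate, and (ii) that the problem is feasible at all, which is clear since the partition $S = V \setminus \{t\}$, $\Sbar = \{t\}$ is always an ideal cut (no edge can enter $S$ from $\{t\}$ because... well, one should instead use $S = \{v : v \neq t\}$ only if no edge leaves $t$; more safely, $S = \{s\}$, $\Sbar = V \setminus \{s\}$ is an ideal cut since $s$ has no in-edges is not guaranteed either). The cleanest feasibility witness is to take $S$ to be the set of vertices reachable from $s$... actually the simplest is: the all-relevant fact is that $G$ is a DAG, so it has a topological order; pick any ``downward closed'' prefix not containing $t$ and not... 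To keep the write-up short I would simply note that since $G$ is acyclic it admits a topological ordering with $s$ first and $t$ last, and any threshold in that ordering placing $s$ below and $t$ above the threshold yields a feasible $y$, so (\ref{ip}) has an optimal solution; then the bijection argument above finishes the proof.

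\begin{proof}
Let $y$ be any feasible solution to (\ref{ip}). Define $S = \{v \in V : y_v = 0\}$ and $\Sbar = V \setminus S = \{v \in V : y_v = 1\}$. By the constraints $y_s = 0$ and $y_t = 1$ we have $s \in S$ and $t \in \Sbar$. Moreover, for every edge $uv \in E$ the constraint $y_v \geq y_u$ rules out the case $y_u = 1, y_v = 0$; that is, there is no edge $uv \in E$ with $u \in \Sbar$ and $v \in S$. Hence $(S, \Sbar)$ is an ideal cut of $G$. Conversely, given an ideal cut $(S, \Sbar)$, the vector $y$ defined by $y_v = 0$ for $v \in S$ and $y_v = 1$ for $v \in \Sbar$ satisfies all constraints of (\ref{ip}): $y_s = 0$ since $s \in S$, $y_t = 1$ since $t \in \Sbar$, and $y_v \geq y_u$ for every $uv \in E$ because there is no edge from $\Sbar$ to $S$. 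Thus feasible solutions of (\ref{ip}) and ideal cuts of $G$ are in one-to-one correspondence.

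This correspondence preserves the objective. For each edge $uv \in E$, since $y$ is $\{0,1\}$-valued and $y_v \geq y_u$, we have $y_v - y_u = 1$ exactly when $y_u = 0$ and $y_v = 1$, i.e., when $u \in S$ and $v \notin S$, and $y_v - y_u = 0$ otherwise. Therefore
\[
\sum_{uv \in E} w_{uv}(y_v - y_u) \;=\; \sum_{uv : u \in S,\, v \notin S} w_{uv},
\]
which is precisely the weight of the ideal cut $(S, \Sbar)$. Since $G$ is acyclic it has a topological ordering beginning with $s$ and ending with $t$, and assigning $y_v = 0$ to all vertices up to and including $s$'s prefix and $y_v = 1$ to the rest (split anywhere strictly between $s$ and $t$ in this order) gives a feasible solution; hence (\ref{ip}) attains an optimum. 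Because the objective values agree term by term under the bijection above, a solution $y$ is optimal for (\ref{ip}) if and only if the corresponding ideal cut $(S, \Sbar)$ has maximum weight. In particular, an optimal solution to (\ref{ip}) is a maximum weight ideal cut in $G$.
\end{proof}
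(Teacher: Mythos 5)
Your proof is correct and takes essentially the same approach as the paper: define $S = \{v : y_v = 0\}$, observe that the constraints $y_v \geq y_u$ forbid edges from $\Sbar$ into $S$, and show the objective equals the cut weight term by term. You are slightly more thorough than the paper in that you explicitly verify the converse direction (every ideal cut gives a feasible $\{0,1\}$-solution) and note feasibility, both of which the paper leaves implicit; these are harmless and even welcome additions, but not a different argument.
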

\begin{proof}
	Let 
	$ S = \{ v: y_v = 0 \}.$ 
	The set of constraints 
	\[y_v \geq y_u \quad \forall e = uv \in E\]
	guarantees that there are no edges coming into $S$. Hence, $S$ forms an ideal cut. 
	For each edge $e = uv \in E$, 
	\[
	 y_v - y_u =
	\begin{cases}
	1 &\mbox{if $u \in S$ and $v \not \in S$,} \\
	0 &\mbox{otherwise.}
	\end{cases}
	\]
	Therefore, 
	\[\sum_{e \in E} w_e \left( y_v - y_u \right) = \sum_{uv: u \in S, v \not \in S} w_{uv} = \sum_{e \text{ cross } S} w_{e}. \]
	Thus, (\ref{ip}) finds an ideal cut that maximizes the sum of weights of crossing edges as desired.
\end{proof}

Now consider the following LP relaxation of (\ref{ip}):
\begin{equation}
\label{lp}
\begin{aligned}
\max  &~~~ \sum_{uv \in E} w_{uv} \left( y_v - y_u \right)   \\
\ST   &~~~ y_v \geq y_u   &\quad& \forall e = uv \in E \\
&~~~ y_t = 1 \\
&~~~ y_s = 0. 
\end{aligned}
\end{equation}
Note that the above constraints imply $0 = y_s \leq y_v \leq  y_t = 1$ for each $v \in V$ since there is a directed path from $s$ to $v$ and a directed path from $v$ to $t$. We show how to round a solution of (\ref{lp}) to an integral solution with the same objective function value. Later on we show that any basic feasible solution of (\ref{lp}) is integral anyway.

Let $\yy$ be a (fractional) optimal solution of (\ref{lp}) and 
$\yy^*$ be an integral solution such that 
\[
y^*_v =
\begin{cases}
1 &\mbox{if $y_v > 0$,} \\
0 &\mbox{if $y_v = 0$.}
\end{cases}
\]

\begin{lemma} \label{lem:round}
	$\yy^*$ has the same objective value as $\yy$.
\end{lemma}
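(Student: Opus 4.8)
The plan is to show that rounding up every positive coordinate of the fractional optimum $\yy$ cannot decrease the objective; since $\yy^*$ is feasible (it satisfies all the constraints — see below) and $\yy$ is optimal, the two objective values must coincide. First I would check feasibility of $\yy^*$: the constraint $y_s = 0$ holds because $y_s = 0$ already; $y_t = 1$ holds because $y_t = 1 > 0$; and for any edge $uv \in E$ we have $y_u \le y_v$, so if $y_u > 0$ then $y_v > 0$ and hence $y^*_u = 1 = \dots$ forces $y^*_u \le y^*_v$, while if $y_u = 0$ then $y^*_u = 0 \le y^*_v$ trivially. Thus $\yy^*$ is a feasible point of (\ref{lp}), so its objective is at most that of $\yy$.

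For the reverse inequality I would compare the contribution of each edge $uv \in E$ to the two objectives. Write $\Delta_e = y_v - y_u \ge 0$ and $\Delta^*_e = y^*_v - y^*_u \in \{0,1\}$. The key observation is a sign/magnitude comparison case by case on the values of $y_u, y_v$:
if $y_u = y_v = 0$ then $\Delta_e = \Delta^*_e = 0$;
if $y_u = 0 < y_v$ then $\Delta^*_e = 1 \ge y_v - 0 = \Delta_e$, i.e.\ $\Delta^*_e \ge \Delta_e$ with equality iff $y_v = 1$;
if $0 < y_u \le y_v$ then $y^*_u = y^*_v = 1$ so $\Delta^*_e = 0 \le \Delta_e$.
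So on edges with $y_u = 0 < y_v$ the rounded objective gains (weighted by $w_{uv}$, which may be negative), while on edges with $0 < y_u$ the rounded objective loses $w_{uv}\Delta_e$. To control both directions I would invoke optimality of $\yy$ via a perturbation argument: consider scaling all strictly-positive coordinates toward or away from their current values, or more cleanly, observe that the set of coordinates taking a fixed value can be nudged simultaneously. Concretely, let $0 < \lambda_1 < \lambda_2 < \cdots < \lambda_k \le 1$ be the distinct positive values appearing among the $y_v$, and for the largest one $\lambda_k$, if $\lambda_k < 1$ one can increase all coordinates equal to $\lambda_k$ slightly (feasibility is preserved since their out-neighbours have value $\ge \lambda_k$ and their in-neighbours have value $\le \lambda_k$ or equal — care is needed here), and the change in objective is linear in the perturbation; optimality forces this derivative to be zero, which lets us push $\lambda_k$ up to $1$ without changing the objective. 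Iterating collapses all positive values to $1$, yielding exactly $\yy^*$ with the same objective.

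The main obstacle is making the perturbation/level-set argument clean: when several coordinates share a value $\lambda_j$, moving them together interacts with edges internal to that level set (which contribute $0$ regardless) and with edges crossing to adjacent level sets, and one must verify that moving the top level set up to merge with $1$ (or down to merge with the next level) stays feasible and keeps the objective constant by optimality — essentially showing the objective is constant along a whole face of the polytope on which $\yy$ lies. An alternative, and probably the slicker route I would actually write up, is the "threshold rounding" identity: for $\theta \in (0,1)$ let $S_\theta = \{v : y_v < \theta\}$, which is an ideal cut for every such $\theta$; then $\sum_{uv\in E} w_{uv}(y_v - y_u) = \int_0^1 \big(\text{weight of cut } S_\theta\big)\, d\theta$, so the fractional objective is an average of integral ideal-cut weights. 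Hence some $S_\theta$ has weight $\ge$ the fractional optimum; but the fractional optimum is an upper bound on every integral cut weight (as (\ref{lp}) relaxes (\ref{ip})), so equality holds for a.e.\ $\theta$, and in particular for $\theta$ just above $0$ the cut $S_\theta = \{v : y_v = 0\}$ is exactly the one defining $\yy^*$. I would present this integral-representation argument as the proof, as it avoids the delicate feasibility bookkeeping of the direct perturbation.
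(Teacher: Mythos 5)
Your final argument (the integral/threshold-rounding identity) is correct, but it takes a genuinely different route from the paper's. The paper fixes an arbitrary fractional level $a \in (0,1)$, lets $S_a$ be the vertices with potential exactly $a$, perturbs all of $S_a$ by $\pm\delta$, and uses optimality of $\yy$ to extract the balance $\sum_{e \in E_{\text{in}}} w_e = \sum_{e \in E_{\text{out}}} w_e$; this lets it slide $S_a$ up to the next level value without changing the objective, and an induction on the number of distinct $y$-values finishes. Your argument instead uses $\sum_{uv \in E} w_{uv}(y_v - y_u) = \int_0^1 w(S_\theta)\, d\theta$ with $S_\theta = \{v : y_v < \theta\}$, notes that each $S_\theta$ is an ideal cut whose weight is bounded above by the LP optimum (since (\ref{lp}) relaxes (\ref{ip})), and concludes from ``average equals maximum'' that the step function $\theta \mapsto w(S_\theta)$ attains the optimum on every piece of positive length, in particular on $(0, \min\{y_v : y_v > 0\})$, where $S_\theta$ is exactly the cut determined by $\yy^*$. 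Your route is slicker: it replaces perturbation-plus-induction with a single averaging observation and never has to re-verify feasibility of intermediate points. What the paper's route buys is the explicit structural fact that in-weight equals out-weight across every level set of an optimal fractional solution, which your averaging argument only yields implicitly. Your earlier perturbation sketch (pushing the top level $\lambda_k$ to $1$ and iterating) is essentially the paper's argument run from the top rather than from an arbitrary fractional level, so you were right to flag the bookkeeping and switch to the cleaner threshold argument.
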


\begin{proof}	
Assume that $\yy$ is not integral, since otherwise the statement is trivially true.
	We will say that $y_v$ is the potential of $v$.
	Now there must exist $v \in V$ such that $0 < y_v = a < 1$. 
	Denote $S_a$ by the set of all vertices having potential $a$. 
	Let $E_{\text{in}}$ be the set of edges going into $S_a$:
	\[E_{\text{in}} = \{ uv \in E: u \not \in S_a, v \in S_a \}\]
	and $E_{\text{out}}$ be the set of edges going out of $S_a$:
	\[E_{\text{out}} = \{ uv \in E: u \in S_a, v \not \in S_a \}.\]
	
	\textbf{Claim.}
	$ \sum_{e \in E_{\text{in}}} w_e = \sum_{e \in E_{\text{out}}} w_e. $
	
	Consider adding to the potentials of all vertices in $S_a$ an amount $\delta$ where $\abs{\delta}$ is small enough so that no constraint is violated. 
	Specifically, the potential of $v \in S_a$ after modification is $y'_v = y_v + \delta$. The change in objective function along edges in $E_{in}$ is 
	\[\sum_{uv \in E_{\text{in}}} w_{uv} (y'_v - y_u)  - \sum_{uv \in E_{\text{in}}} w_{uv} (y_v - y_u) =  \sum_{uv \in E_{\text{in}}} w_{uv} (y'_v - y_v) =  \sum_{uv \in E_{\text{in}}} w_{uv} \delta. \]
	The change in objective function along edges in $E_{out}$ is
	\[\sum_{uv \in E_{\text{out}}} w_{uv} (y_v - y'_u) - \sum_{uv \in E_{\text{out}}} w_{uv} (y_v - y_u)  =  \sum_{uv \in E_{\text{out}}} w_{uv} (y_u - y'_u) = - \sum_{uv \in E_{\text{out}}} w_{uv} \delta. \]
	The total change is 
	\[ \sum_{uv \in E_{\text{in}}} w_{uv} \delta - \sum_{uv \in E_{\text{out}}} w_{uv} \delta = \delta \left(\sum_{uv \in E_{\text{in}}} w_{uv} - \sum_{uv \in E_{\text{out}}} w_{uv} \right).\]
	If $\sum_{e \in E_{\text{in}}} w_e \not = \sum_{e \in E_{\text{out}}} w_e$, we can always pick a
	 sign for $\delta$ so as to obtain a strictly better solution. Therefore, $\sum_{e \in E_{\text{in}}} w_e = \sum_{e \in E_{\text{out}}} w_e$. 
	
	
	Let $a'$ be the smallest $y$-value that is greater than $a$. $\sum_{e \in E_{\text{in}}} w_e = \sum_{e \in E_{\text{out}}} w_e$ implies that we can increase the potentials of all vertices in $S_a$ to $a'$ and obtain the same objective value. 
	The theorem follows by induction on the number of possible $y$-values. 
\end{proof}

Lemma~\ref{lem:ipCorrectness} and Lemma~\ref{lem:round} give:

\begin{proposition}
	\label{thm:cutTime}
	A maximum weight ideal cut can be found in polynomial time.
\end{proposition}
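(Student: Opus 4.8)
The plan is to combine the two lemmas just proved into a complete polynomial-time algorithm. By Lemma~\ref{lem:ipCorrectness}, an optimal $\{0,1\}$-solution of \eqref{ip} is exactly a maximum weight ideal cut, and by Lemma~\ref{lem:round}, any (possibly fractional) optimal solution $\yy$ of the relaxation \eqref{lp} can be rounded to an integral solution $\yy^*$ with the same objective value. It remains to observe that \eqref{lp} is a linear program of polynomial size in $|V|+|E|$ — it has $|V|$ variables and $O(|E|)$ constraints, all with rational data — so by any polynomial-time LP algorithm (e.g.\ the ellipsoid method or interior-point methods) we can compute an optimal solution $\yy$ in polynomial time. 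Applying the rounding of Lemma~\ref{lem:round} yields an integral optimal solution of \eqref{lp}, which is therefore an optimal solution of \eqref{ip}, hence a maximum weight ideal cut by Lemma~\ref{lem:ipCorrectness}.

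First I would state precisely that \eqref{lp} is a polynomial-size rational LP, so an optimal vertex (basic feasible) solution $\yy$ exists and is computable in polynomial time; one should note the feasible region is nonempty (set $y_v = 0$ for $v$ from which $t$ is reachable without passing an "earlier" vertex — more simply, the all-paths condition guarantees $0 \le y_v \le 1$ and a feasible point exists, e.g.\ the indicator of $\{t\}$'s closure) and the objective is bounded on it. Second, I would invoke Lemma~\ref{lem:round} to pass from $\yy$ to the integral $\yy^*$ without changing the objective value; since the value of \eqref{lp} is at least the value of \eqref{ip} (\eqref{lp} is a relaxation) and $\yy^*$ is a feasible integral point achieving the LP optimum, $\yy^*$ is also optimal for \eqref{ip}. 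Third, I would apply Lemma~\ref{lem:ipCorrectness} to conclude that the cut $S = \{v : y^*_v = 0\}$ is a maximum weight ideal cut, and note the whole procedure runs in polynomial time.

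There is no real obstacle here: the proposition is a short corollary, and the only mild subtlety is making sure the relaxation is genuinely bounded and feasible (both follow immediately from the stated reachability assumption and the derived bounds $0 = y_s \le y_v \le y_t = 1$) and that the rounding step of Lemma~\ref{lem:round} is itself efficiently implementable — but examining its proof, the rounding is just "set every positive coordinate to $1$," which is trivially polynomial, and in fact one could even skip explicit rounding by solving for a vertex solution and appealing to the remark that basic feasible solutions turn out to be integral (this is the content of the later Theorem~\ref{thm.ideal}, but is not needed here). So the expected write-up is a two- or three-sentence proof chaining Lemmas~\ref{lem:ipCorrectness} and~\ref{lem:round} with the polynomial solvability of linear programming.
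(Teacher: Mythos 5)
Your proposal is correct and takes exactly the paper's route: the paper simply states that Lemmas~\ref{lem:ipCorrectness} and~\ref{lem:round} give the proposition, i.e., solve the polynomial-size LP~(\ref{lp}), round the solution via Lemma~\ref{lem:round}, and apply Lemma~\ref{lem:ipCorrectness}. Your added remarks on feasibility, boundedness, and the triviality of the rounding step are sound but only flesh out what the paper leaves implicit.
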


\subsection{The ideal cut polytope}

Consider the polyhedron $P$ formed by the constraints on $\yy$ in (\ref{lp}):
\begin{align*}
 y_v &\geq y_u   \quad \forall e = uv \in E \\
 y_t &= 1 \\
 y_s &= 0. 
\end{align*}

Let $n$ be the number of vertices in $G$. 
A vertex of $P$ is a feasible solution having at least $n$
linearly independent active constraints (constraints that are satisfied at equality). Let $A$ be the set of those
constraints.
Notice that in any feasible solution, $y_s = 0$ and $y_t = 1$ must be active.
Let $G_a$ be a graph such that $V(G_a) = V(G)$ and 
\[E(G_a) = \{ e: \text{ the constraint corresponding to $e$ is in $A$.}\}\]
We call $G_a$ an \emph{active graph}.

\begin{lemma} \label{lem:activeGraph}
	$G_a$ consists of two trees $T_1 \ni s$ and $T_2 \ni t$ such that $V(T_1) \cup V(T_2) = V(G)$ and $V(T_1) \cap V(T_2) = \emptyset$.
\end{lemma}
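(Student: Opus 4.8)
The plan is to argue in two stages: first that $G_a$ is a forest, and then that it has exactly two connected components, one containing $s$ and one containing $t$. For the forest claim, suppose $G_a$ contains a cycle (in the underlying undirected sense). Orient each edge $e = uv \in E(G_a)$ as the active constraint $y_v = y_u$. Walking around an undirected cycle and composing these equalities forces the corresponding rows of the constraint matrix to be linearly dependent: the sum of $\pm(e_v - e_u)$ around the cycle telescopes to zero. Hence a cycle in $G_a$ contradicts the requirement that the $n$ active constraints we selected (including $y_s = 0$, $y_t = 1$, which involve the standalone coordinates $y_s, y_t$) be linearly independent. So the edge-constraints in $A$ span a forest on $V(G)$.

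Next I would count components. A feasible vertex $\yy$ of $P$ lives in the affine subspace cut out by $y_s = 0$, $y_t = 1$; modulo these two equations, being a vertex means the remaining $n - 2$ coordinates are pinned down, i.e. the edge-equalities in $A$ must have rank exactly $n - 2$ on the quotient. A forest on $n$ vertices with $k$ components has $n - k$ edges, all of whose constraint-rows are independent (no cycles), so the edge-constraints contribute rank $n - k$. Adding $y_s = 0$ and $y_t = 1$ gives total rank $n - k + 2$, and this must equal $n$, forcing $k = 2$. It remains to see that $s$ and $t$ lie in different components: if they were in the same tree, the path between them would chain the equalities $y_u = y_v$ to give $y_s = y_t$, contradicting $0 \neq 1$. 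Since there are exactly two components and $s, t$ are in different ones, calling them $T_1 \ni s$ and $T_2 \ni t$ gives the claim, with $V(T_1) \cup V(T_2) = V(G)$ and $V(T_1) \cap V(T_2) = \emptyset$ automatic from ``components of a spanning forest.''

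The one point needing a little care — and the step I expect to be the main (minor) obstacle — is being precise about what ``linearly independent active constraints'' means here: the constraint $y_v \geq y_u$ contributes the row $e_v - e_u$, while $y_s = 0$ and $y_t = 1$ contribute the rows $e_s$ and $e_t$. I would state explicitly that a set of such rows is independent iff the edge-rows form a forest \emph{and} neither $s$ nor $t$ is connected to the other nor creates a cycle together with the two singleton rows; then the cycle-argument and the rank count above go through cleanly. Everything else (telescoping around cycles, the $n-k$ edge count for a $k$-component forest) is routine.
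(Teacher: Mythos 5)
Your proposal is correct and follows essentially the same route as the paper: show the edge-constraints contain no cycle (telescoping dependence), show $s$ and $t$ cannot lie in the same component (chaining active equalities would force $y_s = y_t$), and then count to conclude exactly two components. The one thing to tighten is the order of the rank count: the step ``adding $y_s=0$ and $y_t=1$ gives total rank $n-k+2$'' is only valid once you already know $s$ and $t$ sit in different components (otherwise $e_t$ lies in the span of $e_s$ together with the edge-rows and the rank is only $n-k+1$), so the ``$s,t$ in different components'' argument should come before, not after, the counting; you flag exactly this subtlety at the end, and the paper sidesteps it by noting $G_a$ has at least $n-2$ edges and deducing the two-tree structure directly from ``no cycle, no $s$-$t$ path, $\geq n-2$ edges.''
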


\begin{proof} We prove that $G_a$ contains no cycle and no  $s-t$ path. Since each edge of $G_a$ corresponds to a constraint in $A$, $G_a$ has at least $n-2$ edges. The lemma will then follow.
	
\textbf{Claim.} $G_a$ contains no cycle.

Assume $G_a$ contains cycle $(v_0, v_1\ldots, v_k)$. Since edges in the cycle correspond to active constraints,
$y_{v_i} = y_{v_j}$ for each edge $v_i v_j$ in the cycle. Therefore, $y_{v_0} = y_{v_1}, y_{v_1} = y_{v_2}, \ldots ,y_{v_{k-1}}= y_{v_k}$, which implies $y_{v_0} = y_{v_k}$. It follows that the set of inequalities are not independent.

\textbf{Claim.} $G_a$ contains no $s-t$ path.

Assume $G_a$ contains a path $(s, v_0\ldots, v_k, t)$.  Since edges in the path correspond to active constraints,
$y_{v_i} = y_{v_j}$ for each edge $v_i v_j$ in the path. Therefore, $y_s = y_{v_0} = \ldots = y_{v_k} = y_t$, which is a contradiction.

\end{proof}

An edge in polytope $P$ is defined by the intersection of $n-1$ linearly independent inequalities. Two vertices, also called basic feasible solutions, of the polytope are \emph{neighbors} if and only if they share an edge, i.e., the sets of inequalities that define them differ in only one inequality. Two cuts are said to be neighbors if two basic feasible solutions corresponding to them are neighbors. 

\begin{theorem}
\label{thm.ideal}
	All vertices of polyhedron $P$ are integral, and the set of vertices is precisely the 
	 set of ideal cuts. Moreover, vertices of $P$ corresponding to cuts $(S, \Sbar)$ 
	 and $(S',  \Sbarc)$ are neighbors if and only if $S \subset S'$ or $S' \subset S$. 
\end{theorem}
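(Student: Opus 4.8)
I would prove the three assertions in turn --- integrality of the vertices, the identification of vertices with ideal cuts, and the neighbor characterization --- using Lemma~\ref{lem:activeGraph} as the main structural input; the reverse direction of the neighbor characterization is where I expect the real difficulty. For the first two assertions: let $\yy$ be a vertex of $P$ with active graph $G_a = T_1 \cup T_2$ as in Lemma~\ref{lem:activeGraph}, with $s \in T_1$ and $t \in T_2$. Every edge of $T_1$ is an active constraint $y_u = y_v$, so $\yy$ is constant on the connected vertex set of $T_1$, and since $y_s = 0$ is active that constant is $0$; dually $\yy \equiv 1$ on $T_2$. Thus $\yy$ is the $\{0,1\}$-indicator of the partition $(V(T_1), V(T_2))$, which is an ideal cut by the argument proving Lemma~\ref{lem:ipCorrectness}. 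Conversely, the indicator $\yy$ of an ideal cut $(S,\Sbar)$ is feasible and is a vertex: $G[S]$ is connected, because any $v \in S$ is reached from $s$ by a directed path that cannot leave $S$ (re-entering $S$ would require a forbidden edge into $S$), and $G[\Sbar]$ is connected for the dual reason; spanning trees of $G[S]$ and $G[\Sbar]$ supply $n-2$ independent active edge-constraints, which together with $y_s = 0$ and $y_t = 1$ make $n$. Distinct cuts give distinct indicators, so this is a bijection.

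\textbf{Neighbors imply comparable.} Suppose $\yy$ and $\yy'$ are adjacent vertices with cuts $(S,\Sbar)$ and $(S',\Sbarc)$, and assume for contradiction that $S \setminus S'$ and $S' \setminus S$ are both nonempty. The coordinatewise minimum $\yy \wedge \yy'$ and maximum $\yy \vee \yy'$ both lie in $P$ --- coordinatewise min and max each preserve every inequality $y_v \ge y_u$ and the two equalities --- and they are the indicators of $S \cup S'$ and $S \cap S'$, with $\yy + \yy' = (\yy \wedge \yy') + (\yy \vee \yy')$. Since $[\yy, \yy']$ is a face of $P$, any way of writing one of its relative-interior points as a convex combination of points of $P$ uses only points of $[\yy, \yy']$; but $\yy \wedge \yy'$ is not on $[\yy, \yy']$, since it equals $0$ at a vertex of $S \setminus S'$ (which forces the segment parameter to $0$) and at a vertex of $S' \setminus S$ (which forces it to $1$). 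This contradiction shows $S \subseteq S'$ or $S' \subseteq S$, strictly since the vertices differ.

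\textbf{Comparable imply neighbors --- the crux.} Now take ideal cuts $S \subsetneq S'$ with indicators $\yy \ge \yy'$; I would show that $[\yy', \yy]$ is an edge of $P$. On its relative interior the three blocks $S$, $S' \setminus S$, $\Sbarc$ carry the distinct potentials $0$, $t \in (0,1)$, $1$, so the constraints tight along the whole segment are exactly $y_s = 0$, $y_t = 1$, and the edge-constraints internal to a single block (cross-block edges are strict, and edges entering $S$ or entering $S'$ do not exist by idealness). Since the polytope has dimension $n - 2$, the segment is a face precisely when this system of tight constraints has rank $n - 1$, equivalently when each of $G[S]$, $G[S' \setminus S]$, $G[\Sbarc]$ is connected. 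Connectivity of $G[S]$ and $G[\Sbarc]$ follows as in the first paragraph, so the whole argument reduces to connectivity of $G[S' \setminus S]$; this is the step I expect to be the main obstacle, since the closedness of $S$ and $S'$ does not by itself force it, and I would look for a sharper combinatorial argument (or isolate exactly what extra property of $G$ is needed here). Granting that connectivity, the tight constraints have rank exactly $n - 1$, so $[\yy', \yy]$ is the intersection of $P$ with a line --- a $1$-dimensional face, hence an edge --- and $\yy$ and $\yy'$ are neighbors, which completes the equivalence.
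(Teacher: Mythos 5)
Your arguments for the first two parts --- integrality of the vertices and the bijection with ideal cuts --- coincide with the paper's: read off the active graph $G_a = T_1 \cup T_2$ from Lemma~\ref{lem:activeGraph}, and conversely build $n$ independent active constraints from spanning trees of $G[S]$ and $G[\Sbar]$, both of which are connected for exactly the reason you give.

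For the ``neighbors imply comparable'' direction your route is genuinely different from the paper's. The paper reasons on the active graph: adjacency means one defining set arises from the other by deleting one tree edge $e_1$ and adding one edge $e_2$, and tracking where $e_2$ can land without creating a cycle forces $S' \subset S$. You instead observe that $\yy \wedge \yy'$ and $\yy \vee \yy'$ lie in $P$, that $\yy + \yy' = (\yy\wedge\yy') + (\yy\vee\yy')$, and that a relative-interior point of a face cannot decompose through points of $P$ outside the face. Both are correct; yours is shorter and avoids the tree-surgery bookkeeping.

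For the converse your instinct is exactly right, and the gap you flag is genuine --- not a trick you are missing but a flaw in the claim itself. Take the DAG on $V = \{s,a,b,t\}$ with edges $s\to a$, $s\to b$, $a\to t$, $b\to t$. Then $S = \{s\}$ and $S' = \{s,a,b\}$ are both ideal cuts with $S \subset S'$, but in $(y_a,y_b)$-coordinates $P$ is the unit square and these two cuts sit at the opposite corners $(1,1)$ and $(0,0)$; the open segment between them lies in the interior of $P$, so the vertices are not adjacent. This matches your diagnosis precisely: $G[S'\setminus S] = G[\{a,b\}]$ has no edges, so the constraints tight along the open segment have rank only $n-2$ and the minimal face containing it is two-dimensional. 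Your reduction of adjacency to connectivity of $G[S'\setminus S]$ is correct --- given that $G[S]$ and $G[\Sbarc]$ are connected, the rank of the tight system on the open segment is $n - c$, where $c$ is the number of components of $G[S'\setminus S]$, so one needs $c=1$. The paper's own proof of this direction silently posits ``a spanning tree $T_X$ of $X = S'\setminus S$,'' i.e., assumes $G[X]$ is connected, and that assumption is exactly what fails above. A corrected theorem would say: the two cuts are neighbors if and only if one of $S$, $S'$ contains the other and $G$ restricted to their difference is connected.
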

\begin{proof}
	By Lemma~\ref{lem:activeGraph}, $G_a$ consists of two non-intersecting trees $T_1 \ni s$ and $T_2 \ni t$.
	So $y_v = y_s = 0$ for all $v \in T_1$ and $y_u = y_t = 1$ for all $u \in T_2$. Therefore, $\yy$ is integral.
	
	Now consider an ideal cut defined by $S$. We can find a tree $T_1$ connecting all vertices in $S$, and a tree $T_2$ connecting all vertices in $V(G_a) \setminus S$. Consider the following set of inequalities:
	\begin{enumerate}
		\item $|S| - 1$ constraints corresponding to edges in tree $T_1$,
		\item $ n - |S| - 1$ constraints corresponding to edges in tree $T_2$,
		\item $y_t = 1$ and $y_s = 0$.
	\end{enumerate}
	Clearly, the set contains $n$ linearly independent inequalities, and the basic feasible solution 
	obtained by those inequalities is exactly the ideal cut $(S, \Sbar)$.

Next, we prove the second statement.
	If the cuts defined by $S$ and $S'$ are neighbors, the sets of inequalities defining them differ in only one inequality. Let $G_a$ and $G_a'$ be active graphs for $S$ and $S'$ respectively. By Lemma~\ref{lem:activeGraph}, $G_a$ consists of two trees $T_1, T_2$, 
	and $G_a'$ consists of two trees $T_1',T_2'$. 
	Moreover, $V(T_1) \cup V(T_2) = V(T_1') \cup V(T_2') = V(G)$ and $V(T_1) \cap V(T_2) = V(T_1') \cap V(T_2') = \emptyset$. 
	Since the sets of inequalities defining $S$ and $S'$ differ in only one inequality, $G_a'$ results from $G_a$ by removing an edge $e_1$ and adding an edge $e_2$. 
	Consider the graph $G'$ obtained by removing $e_1$ from $G_a$. Without loss of generality, assume that $e_1 \in E(T_1)$.
	Therefore, there exists $X \subset V(T_1)$ such that vertices in $X$ are not reachable from $s$ in $G'$. 
	By the proof of Lemma~\ref{lem:activeGraph}, $G_a'$ contains no cycle. Hence, $e_2$ can either connect $X$ to a vertex in $V(T_1) \setminus X$ or a vertex in $V(T_2)$.
	If $e_2$ connects $X$ to a vertex in $V(T_1)$, $S = V(T_1) = V(T_1') = S'$, which contradicts the fact that $S$ and $S'$ are neighbors. If $e_2$ connects $X$ to a vertex in $V(T_2)$, we have $S' = V(T_1') = V(T_1) \setminus X \subset V(T_1) = S$. 
	
	On the other direction, assume that $S' \subset S$ without loss of generality. We will give a set of active inequalities defining $S$ and a set of active inequalities defining $S'$ such that they differ in only one inequality. Let $X = S \setminus S'$. Let $T_S, T_X, T_{\overline{S'}}$ be spanning trees of $S,X$ and $V(G) \setminus S'$ respectively.
	Let $v$ be a vertex in $X$. 
	By assumption on $G$, there exists a path $Q$ from $s$ to $t$ containing $v$.
	Since $S$ and $S'$ are ideal cuts, there exist an edge $e_1 \in Q$ from $S$ to $X$ 
	and an edge $e_2 \in Q$ from $X$ to $V(G) \setminus S'$.
	Consider the set of inequalities for edges in $E(T_S) \cup  E(T_X) \cup E(T_{\overline{S'}}) \cup \{e_2\}$. These inequalities define $S$. Similarly, the inequalities for edges in $E(T_S) \cup  E(T_X) \cup E(T_{\overline{S'}}) \cup \{e_1\}$ define $S'$. The two sets of inequalities differ by only one inequality as desired. 
\end{proof}

Theorem \ref{thm.ideal} justifies calling the polytope defined in this section {\em the ideal cut 
polytope}.

\section{Maximum Weight Ideal Cuts: Combinatorial Algorithm}
\label{sec:lattice}

\subsection{The set of maximum weight ideal cuts forms a lattice}

We first prove the following fact.

\begin{lemma}
\label{lem.max-wt}
		If $S$ and $S'$ are two subsets defining 
	maximum weight ideal cuts in $G$ then $S \cup S'$ and $S \cap S'$ also
	define maximum weight ideal cuts.
\end{lemma}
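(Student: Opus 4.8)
My plan is to reduce the statement to a single modularity identity for the cut weight and then finish with a sandwiching argument. Throughout, write $w(X)$ for the weight of the ideal cut $(X,\overline X)$.

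First I would check that $S \cup S'$ and $S \cap S'$ are ideal cuts at all. We have $s \in S \cap S'$ and $t \notin S \cup S'$ immediately. For the cut condition, note that being an ideal cut means precisely being closed under predecessors: if $uv \in E$ and $v \in X$ then $u \in X$ (otherwise $uv$ runs from $\overline X$ into $X$). Since unions and intersections of predecessor-closed sets are again predecessor-closed, both $S \cup S'$ and $S \cap S'$ are ideal cuts.

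The heart of the argument is the identity
\[
w(S \cup S') + w(S \cap S') = w(S) + w(S').
\]
I would prove this through the vertex-variable formulation of Section~\ref{sec:cuts}: the map $\yy \mapsto \sum_{uv \in E} w_{uv}(y_v - y_u)$ is linear in $\yy$, and by the proof of Lemma~\ref{lem:ipCorrectness}, evaluating it at $\yy = \one_{\overline X}$ for an ideal cut $X$ returns exactly $w(X)$. For any two subsets one has the pointwise identity $\one_{\overline{S \cup S'}} + \one_{\overline{S \cap S'}} = \one_{\overline{S}} + \one_{\overline{S'}}$ (because $\overline{S\cup S'}=\overline{S}\cap\overline{S'}$ and $\overline{S\cap S'}=\overline{S}\cup\overline{S'}$, and $\min+\max$ is the identity on pairs of $0/1$ values). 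Applying the linear map to both sides, and invoking the previous paragraph so that all four evaluations are genuine cut weights, yields the identity. One could instead argue edge by edge, using that $v \in X \Rightarrow u \in X$ for an ideal cut $X$ to eliminate most membership patterns of $u$ and $v$ in $S,S'$; but the linear route is cleaner and is what I would write.

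To conclude: let $\OPT$ be the maximum weight of an ideal cut in $G$, and suppose $w(S) = w(S') = \OPT$. The identity gives $w(S\cup S') + w(S\cap S') = 2\OPT$; since $S\cup S'$ and $S\cap S'$ are ideal cuts, each of their weights is at most $\OPT$, which forces both to equal $\OPT$. Hence $S\cup S'$ and $S\cap S'$ define maximum weight ideal cuts. The one conceptual point worth flagging — the only real ``obstacle'' — is recognizing that the cut weight is \emph{modular} on the family of ideal cuts, not merely sub- or supermodular: the directed/ideal structure is exactly what removes the slack one usually sees for cut functions. Everything else is routine.
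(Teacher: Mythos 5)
Your argument is correct, and it reaches the conclusion by a genuinely different route than the paper's. The paper partitions the edges relevant to the four cuts into classes $E_1,\dots,E_5$ according to which of $S\cap S'$, $S\setminus S'$, $S'\setminus S$, $V\setminus(S\cup S')$ the two endpoints lie in (first noting the directed-graph fact that no edge joins $S\setminus S'$ to $S'\setminus S$, since both $S$ and $S'$ are ideal), expresses each of $w(S)$, $w(S')$, $w(S\cup S')$, $w(S\cap S')$ as a sum over some of these classes, and then argues by contradiction that $\sum_{E_1}w_e=\sum_{E_3}w_e$ and $\sum_{E_2}w_e=\sum_{E_4}w_e$, from which all four weights coincide. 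Your proof isolates the modularity identity $w(S\cup S')+w(S\cap S')=w(S)+w(S')$, which the paper uses only implicitly, and derives it in one line from the linearity of $\yy\mapsto\sum_{uv\in E}w_{uv}(y_v-y_u)$ together with the pointwise identity $\one_{\overline{S\cup S'}}+\one_{\overline{S\cap S'}}=\one_{\overline{S}}+\one_{\overline{S'}}$; the sandwich then replaces the paper's case-by-case contradiction. Your route is shorter and more conceptual, and it makes explicit the one step the paper silently assumes, namely that $S\cup S'$ and $S\cap S'$ are themselves ideal cuts (needed so that evaluating the linear functional at $\one_{\overline{X}}$ really returns a cut weight). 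The paper's route is more self-contained in that it never refers back to the LP formulation, and it surfaces the absence of ``diagonal'' edges as the structural reason why the cut weight is exactly modular on ideal cuts rather than merely super- or submodular; in your version that fact is absorbed into the ideal-cut checks rather than appearing on the page.
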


\begin{proof}
	Let $E_1$ be the set of edges going from $S \cap S'$ to $S \setminus S'$:
	\[E_1 = \{uv \in E: u \in S \cap S', v \in S \setminus S' \}.\]
	Let $E_2$ be the set of edges going from $S \cap S'$ to $S' \setminus S$:
	\[E_2 = \{uv \in E: u \in S \cap S', v \in S' \setminus S \}.\]
	Let $E_3$ be the set of edges going from $S \setminus S'$ to $V \setminus (S' \cup S)$:
	\[E_3 = \{uv \in E: u \in S \setminus S', v \in V \setminus (S' \cup S) \}.\]
	Let $E_4$ be the set of edges going from $S' \setminus S$ to $V \setminus (S' \cup S)$:
	\[E_4 = \{uv \in E: u \in S' \setminus S, v \in V \setminus (S' \cup S) \}.\]
	Let $E_5$ be the set of edges going from $S' \cap S$ to $V \setminus (S' \cup S)$:
	\[E_5 = \{uv \in E: u \in S' \cap S, v \in V \setminus (S' \cup S) \}.\]
	Note that there are no edges going between $S' \setminus S$ and $S \setminus S'$.
	
	Therefore, the weight of the ideal cut defined by $S$ is
	\[ w(S) =  \sum_{e \in E_2} w_e  + \sum_{e \in E_3} w_e + \sum_{e \in E_5} w_e.  \]
	The size of the ideal cut defined by $S'$ is
	\[ w(S') = \sum_{e \in E_1} w_e  + \sum_{e \in E_4} w_e + \sum_{e \in E_5} w_e.  \] 
	Since both cuts are maximum weight ideal cuts, 
	\[ \sum_{e \in E_2} w_e  + \sum_{e \in E_3} w_e = \sum_{e \in E_1} w_e  + \sum_{e \in E_4} w_e.\]
	
	We will show that $\sum_{e \in E_1} w_e = \sum_{e \in E_3} w_e$ and $\sum_{e \in E_2} w_e = \sum_{e \in E_4} w_e$. 
	Assume that  $\sum_{e \in E_1} w_e < \sum_{e \in E_3} w_e$ and $\sum_{e \in E_2} w_e < \sum_{e \in E_4} w_e$. Then 
	the weight of the cut defined by $S \cup S'$ is 
	\[w(S \cup S') = \sum_{e \in E_3} w_e  + \sum_{e \in E_4} w_e + \sum_{e \in E_5} w_e > w(S). \]
	Similarly, if $\sum_{e \in E_1} w_e > \sum_{e \in E_3} w_e$ and $\sum_{e \in E_2} w_e > \sum_{e \in E_4} w_e$, 
	\[w(S \cap S') = \sum_{e \in E_1} w_e  + \sum_{e \in E_2} w_e + \sum_{e \in E_5} w_e > w(S). \]
	Therefore,  $\sum_{e \in E_1} w_e = \sum_{e \in E_3} w_e$, $\sum_{e \in E_2} w_e = \sum_{e \in E_4} w_e$ and 
	\[w(S \cup S') = w(S \cap S') = w(S) = w(S').\]
\end{proof}

Lemma \ref{lem.max-wt} gives:

\begin{theorem}
\label{thm.max-wt}
		The set of maximum weight ideal cuts forms a lattice under 
the operations of union and intersection.
\end{theorem}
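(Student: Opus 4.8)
The plan is to reduce everything to Lemma~\ref{lem.max-wt}, which already carries out the substantive work. Let $\cF$ denote the family of subsets $S \subseteq V(G)$ that define maximum weight ideal cuts. First I would note that $\cF$ is nonempty, since a maximum weight ideal cut exists (Proposition~\ref{thm:cutTime}), and that every member of $\cF$ is in particular an ideal cut, hence contains $s$ and excludes $t$. Lemma~\ref{lem.max-wt} then says exactly that $\cF$ is closed under the two binary set operations $\cup$ and $\cap$: if $S, S' \in \cF$ then $S \cup S' \in \cF$ and $S \cap S' \in \cF$.

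Next I would verify that $(\cF, \subseteq)$, with these operations, is a lattice in the order-theoretic sense. This is the routine part. For $S, S' \in \cF$, the set $S \cap S'$ is a lower bound of $\{S, S'\}$ in $(\cF, \subseteq)$ because it is contained in each of them and lies in $\cF$; it is the greatest lower bound because any $T \in \cF$ with $T \subseteq S$ and $T \subseteq S'$ satisfies $T \subseteq S \cap S'$. Dually, $S \cup S'$ is the least upper bound of $\{S, S'\}$. Hence every pair of elements of $\cF$ has a meet and a join, namely $\cap$ and $\cup$, so $\cF$ is a lattice. Moreover, since these operations on $\cF$ are simply the restrictions of the operations of the Boolean lattice $2^{V(G)}$, all lattice identities — in particular the distributive laws — are inherited for free, so $\cF$ is in fact a distributive lattice, i.e.\ a sublattice of $2^{V(G)}$.

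I do not expect any genuine obstacle in this step: all of the difficulty has been absorbed into the weight-balancing argument of Lemma~\ref{lem.max-wt}, which shows that replacing two optimal cuts by their union or intersection does not change the cut weight. The only two points that need a word of care are: (i) confirming that $S \cap S'$ and $S \cup S'$ are again \emph{ideal} cuts in the first place — this is immediate, because an ideal cut is precisely a vertex set containing $s$, not containing $t$, and closed under taking in-neighbors, and such closed sets are themselves closed under union and intersection; and (ii) making sure that the inclusion order yields $\cup$ and $\cap$ as the lattice operations rather than some other pair, which is exactly the standard fact recorded above about a subfamily of a power set that is closed under both set operations. With these checks in place the theorem follows directly from Lemma~\ref{lem.max-wt}.
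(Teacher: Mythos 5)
Your proposal is correct and takes essentially the same approach as the paper: both reduce the theorem to Lemma~\ref{lem.max-wt} (closure of the family of maximum weight ideal cuts under $\cup$ and $\cap$), and the paper presents the theorem as an immediate consequence. You simply make explicit the routine order-theoretic verification and the observation that the union and intersection of ideal cuts are again ideal cuts, which the paper leaves implicit.
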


\subsection{A flow problem in which capacities are lower bounds on edge-flows}

To unveil the underlying combinatorial structure, we consider the dual program of (\ref{lp}). First, (\ref{lp}) can be rewritten as: 

\begin{equation}
\label{another-lp}
\begin{aligned}
\max  &~~~ \sum_{e \in E} w_e z_e  \\
\ST   &~~~ z_e = y_v - y_u &\quad& \forall e = uv \in E \\
&~~~ y_t - y_s = 1 \\
&~~~ z_e \geq 0 &\quad& \forall e \in E.
\end{aligned}
\end{equation}

Let $f_{uv}$ be the dual variable corresponding to edge $uv$. The dual linear program is: 
\begin{equation}
\label{flow-lp}
\begin{aligned}
\min  &~~~ f_{ts}  \\
\ST   &~~~ \sum_{u:uv \in E} f_{uv} =  \sum_{u:vu \in E} f_{vu}  &\quad& \forall v \in V \\
&~~~ f_{uv} \geq w_{uv}  &\quad& \forall uv \in E
\end{aligned}
\end{equation}

We show that (\ref{flow-lp}) can be interpreted as a flow problem. 
To be precise, $f_{uv}$ represents the flow value on edge $uv$. 
The first set of inequalities guarantees flow conservation at each vertex.  
The second set of inequalities says that there is a lower bound $w_{uv}$ on the amount of flow on $uv$. 
Note that $f_{uv}$ as well as $w_{uv}$ can be negative.
 
The problem is to find a minimum circulation in the graph obtained by adding an infinite capacity edge from $t$ to $s$ to $G$. 
Equivalently, without the introduction of $ts$, the problem can be seen as finding a minimum 
flow from $s$ to $t$ in $G$.


We give a combinatorial algorithm to solve the above flow problem. 
The high level idea is to first find a feasible flow, i.e, a flow $\ff$ satisfying all inequalities. We then push flow back as much as possible from $t$ to $s$, while maintaining flow feasibility.

It is easy to see that the routine in Figure~\ref{alg:feasible} gives us a feasible flow.
At the end of the routine, the value of flow from $s$ to $t$ is at most $nW$ where $W = \max_{e} |w_e|$. 
\begin{figure}
	\begin{algorithmic}
			\While{there exists an edge $uv \in E$ such that $w_{uv} >0$ and $f_{uv} < w_{uv}$}
			\State Find a path $Q$ from $s$ to $t$ containing $uv$.
			\State Send flow of value $w_{uv}$ along $Q$.
			\EndWhile
	\end{algorithmic}
	\caption{Routine for Finding a Feasible Flow.}
	\label{alg:feasible} 
\end{figure}

To push flow back from $t$ to $s$, we construct the following residual graph $G_{\ff}$ for a feasible flow $\ff$.
Since $\ff$ is feasible, $f_{uv} \geq w_{uv} \, \forall uv \in E$. 
For each $uv \in E$ such that $f_{uv} > w_{uv}$, we create a residual edge from $v$ to $u$ with capacity $f_{uv} - w_{uv} > 0$. Notice that the capacity on $vu$ is exactly the amount we can push back on $uv$ without violating the lowerbound constraint. Finally, all edges in $E$ still have infinite capacity. 

Let $\xx > 0$ be a feasible flow in $G_{\ff}$. In other words, $\xx$ satisfies flow conservation and capacity constraints. Let $\ffbar = \ff \oplus \xx$ be a flow constructed as follows: 
\begin{equation}
\label{eq:augment}
\fbar_{uv} =
\begin{cases}
f_{uv} + x_{uv} - x_{vu} &\mbox{if $vu$ is an edge in $G_{\ff}$,} \\
f_{uv} + x_{uv} &\mbox{otherwise.}
\end{cases}
\end{equation}

\begin{lemma}
	\label{lem:flowFeasibility}
	$\ffbar$ is a feasible solution to (\ref{flow-lp}).	
\end{lemma}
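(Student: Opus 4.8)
The plan is to verify the two families of constraints of (\ref{flow-lp}) separately for $\ffbar = \ff \oplus \xx$: first flow conservation at every vertex, then the lower-bound constraints $\fbar_{uv} \geq w_{uv}$ on every edge.

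For flow conservation, I would fix a vertex $v$ and compute $\sum_{u: uv \in E} \fbar_{uv} - \sum_{u: vu \in E} \fbar_{vu}$ by substituting the definition (\ref{eq:augment}). Since $\ff$ already satisfies conservation at $v$, the net change equals the contribution of the $\xx$-terms. The key bookkeeping point is that each residual edge $vu$ of $G_{\ff}$ (oriented opposite to an original edge $uv \in E$) contributes $+x_{vu}$ to one of the sums through the term $-x_{vu}$ appearing with a sign flip, and $-x_{vu}$ to the other, exactly matching how the flow $\xx$ itself is conserved at $v$ in the graph $G_{\ff}$. Concretely, I would argue that the net $\ff$-change at $v$ coming from the augmentation equals $\left(\sum_{\text{edges of } G_\ff \text{ into } v} x\right) - \left(\sum_{\text{edges of } G_\ff \text{ out of } v} x\right)$, which is $0$ because $\xx$ is a feasible circulation/flow in $G_{\ff}$ satisfying conservation there. (If $G_{\ff}$ has $s$–$t$ imbalance, one notes $\ffbar$ only needs to be feasible for (\ref{flow-lp}), whose conservation constraints are the circulation constraints with the $ts$ edge present, so the imbalance is absorbed exactly as for $\ff$.) This step is essentially a sign-chasing exercise and should go through cleanly once the cases in (\ref{eq:augment}) are tracked carefully.

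For the lower-bound constraints, I would split into the two cases of (\ref{eq:augment}). If $vu$ is not an edge of $G_{\ff}$, then $\fbar_{uv} = f_{uv} + x_{uv} \geq f_{uv} \geq w_{uv}$ since $x_{uv} \geq 0$ and $\ff$ is feasible. If $vu$ is an edge of $G_{\ff}$, its capacity is $f_{uv} - w_{uv}$ by construction, so the capacity constraint on $\xx$ gives $x_{vu} \leq f_{uv} - w_{uv}$; hence $\fbar_{uv} = f_{uv} + x_{uv} - x_{vu} \geq f_{uv} - x_{vu} \geq f_{uv} - (f_{uv} - w_{uv}) = w_{uv}$, again using $x_{uv} \geq 0$. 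So all lower bounds are preserved.

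The main obstacle — though it is more of a notational hazard than a genuine difficulty — is the conservation computation: one must be careful that when both $uv \in E$ and its residual reverse $vu$ are present, the contribution to conservation at $u$ and at $v$ is accounted for exactly once with the correct sign, and that the definition (\ref{eq:augment}) is applied consistently (it defines $\fbar$ only on original edges $uv \in E$, with the residual flow "folded back" onto them). Once that is set up, the equality $\sum_{\text{in}} \fbar - \sum_{\text{out}} \fbar = \left(\sum_{\text{in}} f - \sum_{\text{out}} f\right) + \left(\sum_{\text{in}} x - \sum_{\text{out}} x\right)_{G_\ff} = 0 + 0$ follows, completing the proof.
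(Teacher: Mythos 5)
Your proposal is correct and follows the same approach as the paper: the lower-bound case analysis you give is essentially identical, and your more detailed sign-chasing for conservation fills in what the paper dismisses as ``trivial'' without changing the argument. The added observation about the $ts$ edge absorbing any $s$--$t$ imbalance is a reasonable clarification but not a departure from the paper's route.
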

\begin{proof}
	Flow conservation is satisfied trivially. It suffices to show that no lower bound constraint is violated. Consider 2 cases: 
	\begin{itemize}
		\item if $vu$ is an edge in $G_{\ff}$, the capacity of $vu$ is $f_{uv} - w_{uv}$. Therefore, $f_{uv} + x_{uv} - x_{vu} \geq f_{uv} - (f_{uv} - w_{uv}) = w_{uv}$.
		\item if $vu$ is not an edge in $G_{\ff}$, $f_{uv} + x_{uv} \geq f_{uv} = w_{uv}$.
	\end{itemize}
\end{proof}

\begin{lemma}
	\label{lem:optimality}
	$\ff$ is an optimal solution of (\ref{flow-lp}) if and only if there is no path from $t$ to $s$ in $G_{\ff}$.
\end{lemma}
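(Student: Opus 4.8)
The plan is to prove this standard max-flow/min-cut style optimality criterion for the minimum-flow problem by showing both directions.

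\textbf{($\Leftarrow$, optimality from no augmenting path.)} Suppose there is no path from $t$ to $s$ in $G_{\ff}$. I would exhibit a certificate matching the LP dual, i.e.\ produce an ideal cut whose weight equals $f_{ts}$, using LP duality with \eqref{lp}/\eqref{another-lp}. Concretely, let $S$ be the set of vertices reachable from... actually, since we want no path \emph{to} $s$, let $\overline{S}$ be the set of vertices that can reach $t$ in $G_{\ff}$, and $S = V \setminus \overline{S}$. Because $s$ cannot reach $t$, we have $s \in S$; also $t \in \overline{S}$. I claim $(S,\overline{S})$ is an ideal cut in the \emph{original} DAG $G$: if $uv \in E$ with $u \in \overline{S}$... wait, ideal cut forbids edges from $\overline{S}$ into $S$, so suppose $uv \in E$ with $u \in \overline{S}, v \in S$. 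Hmm, I need to track the residual structure carefully. The correct statement: for each forward edge $uv \in E$ (which has infinite residual capacity in $G_{\ff}$), if $u$ can reach $t$ then so can $v$... that is not what we want either. Let me instead define $\overline{S}$ as vertices from which $t$ is reachable in $G_{\ff}$; then for $uv \in E$ with $v \in \overline{S}$ we get $u \in \overline{S}$ (via the infinite-capacity forward edge $uv$), so no edge of $E$ goes from $S$ to $\overline{S}$ starting at... no: $uv \in E$, $v \in \overline S \Rightarrow u \in \overline S$ means there is no edge from $S$ to $\overline S$; so all edges crossing are from $\overline S$ to $S$? That contradicts $t \in \overline S$ being a sink. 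I will need to be careful, but the mechanism is: saturation of the cut. For edges $uv$ with $u \in S, v \in \overline S$, the residual edge $vu$ (capacity $f_{uv} - w_{uv}$) must be absent, so $f_{uv} = w_{uv}$; for edges $uv$ with $u \in \overline S, v \in S$, the forward edge $uv$ has infinite capacity, so the absence of a $t$-$s$ path forces... such edges cannot exist if $S$ is chosen as the set \emph{not} reaching $s$ backwards — actually I should mirror the standard argument: let $S$ = vertices that \emph{cannot reach} $s$ in $G_{\ff}$ together with... I will settle the exact definition in the write-up; the key point is that the chosen cut $(S,\overline S)$ has $f_{ts} = \sum_{u \in \overline S, v \in S} f_{uv} - \sum_{u \in S, v \in \overline S} f_{uv}$ by flow conservation summed over $\overline S$, and saturation/emptiness of residual edges forces this to equal $\sum_{uv \text{ cross}} w_{uv}$ with the correct signs, i.e.\ $f_{ts} = w(S)$ for an ideal cut $S$. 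By weak LP duality $f_{ts} \ge \mathrm{OPT}(\ref{lp}) = $ max ideal cut weight $\ge w(S)$, forcing equality, so $\ff$ is optimal.

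\textbf{($\Rightarrow$, augmenting path implies non-optimality.)} Suppose there is a path $P$ from $t$ to $s$ in $G_{\ff}$. Let $\varepsilon > 0$ be the minimum residual capacity along $P$ (finite because $P$ uses at least one residual backward edge — indeed a path from $t$ heading "backwards" toward $s$ must traverse residual edges, since all $E$-edges point "forward" along $s$-$t$ structure and cannot form a $t$-to-$s$ walk alone in a DAG). Route $\varepsilon$ units of flow along $P$: set $\xx$ to be $\varepsilon$ on each edge of $P$; this $\xx$ is a feasible circulation-type flow in $G_{\ff}$ after adding the conceptual $ts$ edge, or more directly it reduces $f_{ts}$. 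By Lemma~\ref{lem:flowFeasibility}, $\ffbar = \ff \oplus \xx$ is feasible for (\ref{flow-lp}). Along $P$ the net effect on the $t \to s$ direction is to decrease $f_{ts}$ by $\varepsilon$ (the path carries flow from $t$ back to $s$ through the residual network, which via \eqref{eq:augment} strictly lowers the objective $f_{ts}$), contradicting optimality of $\ff$.

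\textbf{Main obstacle.} The technical heart is the $(\Leftarrow)$ direction: choosing the cut $(S,\overline S)$ from the residual reachability structure and verifying (i) it is a genuine \emph{ideal} cut in $G$ (no backward edges), and (ii) its weight equals $f_{ts}$ exactly, via careful sign bookkeeping over the three edge types (crossing from $S$ to $\overline S$, crossing from $\overline S$ to $S$, and internal), using that residual edges absent across the cut force $f_{uv} = w_{uv}$ on the relevant edges while flow conservation handles the rest. Getting the orientation conventions straight (the $ts$ edge, the direction of residual edges $vu$ for $uv \in E$, and which side contains $s$) is where the care is needed; the $(\Rightarrow)$ direction is routine given Lemma~\ref{lem:flowFeasibility}.
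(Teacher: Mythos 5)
Your $(\Rightarrow)$ direction matches the paper and is correct: a $t$-to-$s$ residual path lets you push flow back, and Lemma~\ref{lem:flowFeasibility} certifies feasibility with strictly smaller $f_{ts}$.

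Your $(\Leftarrow)$ direction has the right mechanism---build an ideal cut from residual reachability and invoke LP duality---but you never commit to a correct definition of the cut, and the two candidates you actually write down are both wrong. You first take $\overline S$ to be the vertices ``that can reach $t$ in $G_{\ff}$'' (i.e., those $v$ with a path from $v$ to $t$); this fails because $s$ can always reach $t$ through the original $E$-edges, which have infinite residual capacity, so $s$ ends up on the $t$-side of the cut. Your fallback, ``vertices that cannot reach $s$,'' likewise puts $s$ on the wrong side. The set the argument needs---and the one the paper uses---is $T$, the set of vertices \emph{reachable from} $t$ in $G_{\ff}$. Then $t \in T$ and $s \notin T$ by hypothesis; there is no $E$-edge $uv$ with $u \in T$ and $v \notin T$ (the infinite-capacity forward residual edge would put $v$ in $T$), so the $0/1$ vector $\yy$ with $y_v = 1$ exactly for $v \in T$ is primal-feasible; and for any $uv \in E$ with $u \notin T$, $v \in T$, the residual edge $vu$ must be absent (else $u$ would be reachable from $t$ through $v$), hence $f_{uv} = w_{uv}$. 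From there you can finish either by your weak-duality bookkeeping (showing the cut weight equals $f_{ts}$ via conservation summed over $T$) or, as the paper does, by observing that $\ff$ and $\yy$ satisfy complementary slackness; the two are equivalent. The spot you flagged as ``I will settle the exact definition in the write-up'' is precisely where the argument breaks with the sets you chose.
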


\begin{proof}
	Suppose that there exists a path from $t$ to $s$ in $G_{\ff}$. Sending flow 
	along the path gives a feasible flow by Lemma~\ref{lem:flowFeasibility}. 
	Moreover, the objective function has a smaller value. Therefore, $\ff$ is not an optimal solution of (\ref{flow-lp}).
	
	If there is no path from $t$ to $s$, let $T$ be the set of vertices that are reachable from $t$ by a path in $G_{\ff}$: 
	\[ T = \{v \in V: \exists \text{ path } p \text{ from } t \text{ to } v \text{ in } G_{\ff} \}.\]
	Consider $uv \in E$ such that $v \in T$ and $u \not \in T$. Since $vu$ is not an edge in $G_{\ff}$ and $\ff$ is feasible, $f_{uv} = w_{uv}$. 
	
	Let $\yy$ be the primal solution such that $y_v = 1$ for all $v \in T$ and $y_v = 0$ otherwise. With respect to $\yy$, $z_{uv} > 0$ if and only if $v \in T$ and $u \not \in T$ if and only if ${f}_{uv} = w_{uv}$. Therefore, $\ff$ and $\yy$ satisfy complementarity. Hence, $\ff$ is an optimal solution of (\ref{flow-lp}).
\end{proof}

By Lemma~\ref{lem:flowFeasibility} and Lemma~\ref{lem:optimality}, 
a natural algorithm, given a feasible flow $\ff$, is the following: Iteratively find a path from $t$ to $s$ in $G_{\ff}$. If there exists such a path, send maximal flow back on this path
without violating feasibility, 
 update $\ff$ and repeat. Otherwise, $\ff$ is an optimal flow by Lemma~\ref{lem:optimality}.

Notice that the above routine is very similar to the Ford–Fulkerson algorithm for finding maximum $s$-$t$ flow. A more straight forward way is to compute a maximum flow in $G_{\ff}$ for a feasible flow $\ff$ as shown in Figure~\ref{alg:flow}.

\begin{figure}
		\begin{algorithmic}
			\State Find a feasible flow $\ff$.
			\State Find a maximum flow $\xx$ from $t$ to $s$ in $G_{\ff}$.
			\State Return $\ffbar = \ff \oplus \xx$ as shown in \ref{eq:augment}.
		\end{algorithmic}
	\caption{Combinatorial Algorithm for Finding Flow.}
	\label{alg:flow} 
\end{figure}

\begin{proposition}
\label{prop:flowAlg}
	The algorithm in Figure~\ref{alg:flow} finds an optimal flow for (\ref{flow-lp}).
\end{proposition}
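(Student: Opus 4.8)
The plan is to combine the two lemmas just proved with the standard optimality criterion for maximum flow. By Lemma~\ref{lem:flowFeasibility}, the returned $\ffbar = \ff \oplus \xx$ is always a feasible solution of (\ref{flow-lp}); hence, by Lemma~\ref{lem:optimality}, it suffices to show that the residual graph $G_{\ffbar}$, constructed from $\ffbar$ exactly as $G_{\ff}$ was constructed from $\ff$, contains no path from $t$ to $s$.

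First I would note that the maximum flow $\xx$ from $t$ to $s$ in the capacitated network $G_{\ff}$ is well defined and has finite value: since $G$ is a DAG with source $s$ and sink $t$, there is no directed path from $t$ to $s$ using only the original (forward) edges of $E$, so every $t$-to-$s$ path in $G_{\ff}$ must traverse at least one backward residual arc $vu$, and each such arc carries the finite capacity $f_{uv}-w_{uv}$. Let $(G_{\ff})_{\xx}$ denote the ordinary residual network of the flow $\xx$ inside $G_{\ff}$. Because $\xx$ is a maximum $t$-$s$ flow, $(G_{\ff})_{\xx}$ has no augmenting $t$-$s$ path.

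The heart of the argument is the claim that every arc of $G_{\ffbar}$ also appears in $(G_{\ff})_{\xx}$; this transfers the absence of a $t$-$s$ path from $(G_{\ff})_{\xx}$ to $G_{\ffbar}$ and finishes the proof via Lemma~\ref{lem:optimality}. The forward arcs of $E$ are uncapacitated and present in both graphs, so only the backward residual arcs $vu$ of $G_{\ffbar}$, i.e.\ those with $\fbar_{uv} > w_{uv}$, need attention, and I would split into two cases according to (\ref{eq:augment}). If $f_{uv} = w_{uv}$, then $\fbar_{uv} = w_{uv} + x_{uv}$, so $\fbar_{uv} > w_{uv}$ forces $x_{uv} > 0$, which makes $vu$ a backward residual arc of the uncapacitated edge $uv \in E$ in $(G_{\ff})_{\xx}$. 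If $f_{uv} > w_{uv}$, then $vu$ is already an arc of $G_{\ff}$ with capacity $f_{uv}-w_{uv}$, and $\fbar_{uv} = f_{uv} + x_{uv} - x_{vu}$; the inequality $\fbar_{uv} > w_{uv}$ rearranges to $x_{vu} < (f_{uv}-w_{uv}) + x_{uv}$, so either $x_{uv} > 0$ (and again $vu$ appears as the backward residual arc of $uv \in E$) or $x_{uv} = 0$ and $x_{vu} < f_{uv}-w_{uv}$ (and $vu$ appears as the forward residual arc of the arc $vu$ of $G_{\ff}$). In every case $vu$ is an arc of $(G_{\ff})_{\xx}$, proving the claim.

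I expect the main obstacle to be purely bookkeeping: correctly enumerating which arcs of $(G_{\ff})_{\xx}$ can realize a given backward arc $vu$ of $G_{\ffbar}$ — in particular keeping straight the two parallel copies of $vu$ that arise when $uv$ and its residual reverse $vu$ are both present in $G_{\ff}$ — and treating the case $f_{uv} = w_{uv}$ (where $vu \notin G_{\ff}$) separately. If desired, one can first cancel any flow that $\xx$ routes around a $2$-cycle $\{uv,vu\}$ of $G_{\ff}$, so that $x_{uv}x_{vu}=0$ for antiparallel pairs; this streamlines the case analysis but is not essential. Finally, I would remark that the feasible-flow routine of Figure~\ref{alg:feasible} terminates after at most $|E|$ augmentations and the maximum flow in $G_{\ff}$ is computable in polynomial time, so the algorithm in Figure~\ref{alg:flow} runs in polynomial time as well.
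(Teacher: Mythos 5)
Your proof is correct, but it takes a genuinely different route from the paper's. The paper argues at the level of flows and their composition: it shows that a $t$-$s$ path in $G_{\ffbar}$ would give a flow $\xx'$ with $(\ff\oplus\xx)\oplus\xx' = \ff\oplus(\xx+\xx')$, so $\xx+\xx'$ would be a $t$-$s$ flow in $G_{\ff}$ of larger value than $\xx$, contradicting maximality; it then also proves the converse implication (not strictly needed for the proposition). The associativity identity and the implicit claim that $\xx+\xx'$ respects the capacities of $G_{\ff}$ are stated without justification, which is exactly the kind of composability fact your argument makes explicit. You instead prove the single needed implication directly, via the arc-level inclusion $G_{\ffbar} \subseteq (G_{\ff})_{\xx}$, splitting on $f_{uv}=w_{uv}$ versus $f_{uv}>w_{uv}$ and, within the latter, on whether $x_{uv}>0$. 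This is more verbose but more airtight: it replaces the ``there exists $\xx'$'' step with a concrete structural statement about residual graphs, and the conclusion follows immediately from the max-flow optimality criterion applied in $(G_{\ff})_{\xx}$. One small remark: your justification that the maximum $t$-$s$ flow in $G_{\ff}$ is finite is a bit informal (``every $t$-$s$ path uses a residual arc of finite capacity''); a cleaner version is that the cut $(\{t\},\,V\setminus\{t\})$ has finite capacity since $t$ is a sink of $G$ and hence has only residual outgoing arcs, so max-flow min-cut gives finiteness. This is a cosmetic point and does not affect correctness.
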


\begin{proof}
By Lemma~\ref{lem:optimality}, it suffices to show that	there is no path from $t$ to $s$ in $G_{\ffbar}$ if and only if $\xx$ is a maximum flow from $t$ to $s$ in $G_{\ff}$. 

If there exists a path from $t$ to $s$ in $G_{\ffbar}$, then there exists $\xx'$ such that $\ffbar \oplus \xx' = (\ff \oplus \xx) \oplus \xx' = \ff \oplus (\xx + \xx')$ is a flow of from $s$ to $t$ smaller value than $\ffbar$. Therefore, $\xx + \xx'$ is a flow from $t$ to $s$ in $G_{\ff}$ of greater value than $\xx$, which is a contradiction. 

If $\xx$ is not a maximum flow, there exists $\xx'$ such that $\xx + \xx'$ is a feasible flow from $t$ to $s$ in $G_{\ff}$ of greater value. Therefore, there exists a path from $t$ to $s$ in $G_{\ff \oplus \xx}$.
\end{proof}

\subsection{Generating all maximum weight ideal cuts}
\label{subsec:structure}

The process is similar to finding the Picard-Queyranne structure, whose ideal cuts
are in one-to-one correspondence with the minimum $s$-$t$ cuts in a graph. 
Given an optimal flow solution $\ff$, we shrink the strongly connected components of $G_{\ff}$.
The resulting graph is a DAG $D$. Now,
ideal cuts in $D$ are in one-to-one correspondence with maximum weight cuts in the original graph.
Hence we get:

\begin{theorem} \label{thm:idealcutdag}
	There is a combinatorial polynomial time algorithm for constructing a
	DAG $D$ such that an ideal cut in $D$ bijectively corresponds to a maximum weight ideal cut in $G$. 
\end{theorem}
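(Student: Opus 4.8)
The plan is to follow the template of Picard--Queyranne closely, adapting it to our ``minimum flow with lower-bound capacities'' setting. First I would fix an optimal flow $\ff$ for (\ref{flow-lp}), obtained combinatorially via Proposition~\ref{prop:flowAlg}, and form its residual graph $G_{\ff}$ as defined above. By Lemma~\ref{lem:optimality}, optimality of $\ff$ means there is no $t$-to-$s$ path in $G_{\ff}$; in particular $s$ and $t$ lie in distinct strongly connected components. I would then contract each strongly connected component of $G_{\ff}$ to a single vertex, obtaining a graph $D$ which is acyclic by construction, with a distinguished source component $\sigma \ni s$ and sink component $\tau \ni t$. Constructing $D$ is polynomial time: finding a feasible flow, the maximum flow in $G_{\ff}$, and the strongly connected components are all polynomial, so the algorithmic claim will follow once the bijection is established.

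The core of the argument is the claimed bijection between ideal cuts of $D$ and maximum weight ideal cuts of $G$. I would prove it in two directions. For the forward direction, given an ideal cut $(\mathcal{S},\overline{\mathcal{S}})$ in $D$ with $\sigma \in \mathcal{S}$, I would pull it back to a vertex set $S \subseteq V(G)$ by taking the union of the components in $\mathcal{S}$; I then need to show $S$ is an ideal cut of $G$ (no edge of $E$ enters $S$) and that it is of maximum weight. No edge enters $S$ because any such edge, together with the fact that within each strongly connected component of $G_{\ff}$ we can travel freely, would either violate the ideal-cut property in $D$ or create a residual path — here I would use the structure of $G_{\ff}$: every edge $uv \in E$ is present in $G_{\ff}$ (with infinite capacity), so an edge from $\overline{S}$ into $S$ is a residual edge crossing the contracted cut in the wrong direction. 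To get maximum weight, I would invoke complementary slackness exactly as in the proof of Lemma~\ref{lem:optimality}: set $y_v = 0$ for $v \in S$ and $y_v = 1$ otherwise; since every $E$-edge crossing out of $S$ has $f_{uv} = w_{uv}$ (because its reverse residual edge is absent, as it would otherwise connect two sides that a residual path argument forbids being separated), the pair $(\yy, \ff)$ satisfies complementarity, so $\yy$ is optimal for (\ref{lp}), i.e., $S$ defines a maximum weight ideal cut. For the reverse direction, given a maximum weight ideal cut $(S,\overline{S})$ of $G$, its indicator $\yy$ is an optimal primal solution, so complementary slackness with the optimal $\ff$ forces $z_{uv} = y_v - y_u > 0 \implies f_{uv} = w_{uv}$, hence no residual reverse edge $vu$ crosses from $S$ to $\overline S$; combined with the absence of $E$-edges entering $S$, this shows $S$ is a union of strongly connected components of $G_{\ff}$ that contains no residual edge from $\overline S$ into $S$, i.e., $S$ descends to a well-defined ideal cut of $D$. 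The two maps are mutually inverse since both are just ``union of components'' / ``which components are in $S$.''

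The step I expect to be the main obstacle is showing that the weight is preserved across the correspondence, i.e., that \emph{every} pullback of an ideal cut of $D$ has the \emph{same} (maximum) weight, not merely that it is an ideal cut. The clean way to see this is again complementary slackness: the argument above shows each such $S$ gives an optimal primal solution, and all optimal primal solutions have the same objective value, which equals $\OPT$. But making the complementarity argument airtight requires the precise claim that for every $E$-edge $uv$ leaving $S$ we have $f_{uv} = w_{uv}$; this rests on the fact that the reverse residual edge $vu$, if it existed, would have to stay on one side of the contracted cut (since $v \in S$ and $u \in \overline S$ lie in different components with $vu$ going ``backwards''), contradicting that an ideal cut of $D$ has no arc from $\overline{\mathcal S}$ to $\mathcal S$ — so I would need to carefully verify the orientation bookkeeping between $G_{\ff}$, its contraction $D$, and the cut $S$. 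Once that is nailed down, the lattice consequence and the bijection both drop out, and the theorem is proved.
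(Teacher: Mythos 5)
Your proposal is correct and follows exactly the construction the paper uses (shrink the strongly connected components of the residual graph $G_{\ff}$ of an optimal flow, \`a la Picard--Queyranne); the paper itself only sketches this by analogy to \cite{PQ80} without spelling out the bijection, while you supply the missing complementary-slackness verification, reusing the technique from Lemma~\ref{lem:optimality}. One small slip: when arguing that every $E$-edge $uv$ leaving $S$ has $f_{uv}=w_{uv}$, you wrote ``$v\in S$ and $u\in\overline S$''; with $uv$ leaving $S$ it should read $u\in S$ and $v\in\overline S$, so the reverse residual edge $vu$ would go from $\overline S$ to $S$ and hence cross the contracted cut backwards---the conclusion you draw is still the right one.
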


    \section{Maximum Weight Stable Matching Problem}

\subsection{The reduction}
\label{sec:reduction}

Given an instance $I$ of maximum weight stable matching problem, we show how to obtain an instance 
$J$ of maximum weight ideal cut problem such that there is a bijection between the set of 
solutions to $I$ and those to $J$. 

For this purpose, we show how to construct a DAG $G$ with an edge-weight function $w$.
We start with the rotation poset $\Pi$ that generates all stable matchings for $I$. This can be 
obtained in polynomial time by Lemma~\ref{lem:computePoset}. Next,  
we construct an edge-weighted DAG $G$ as follows: 
\begin{enumerate}
	\item Keep all vertices and edges in the natural DAG representation of $\Pi$. Let $v_i$ be the vertex that corresponds to $\rho_i$. 
	\item Add a source $s$ and an edge from $s$ to every $v_i$ such that $\rho_i$ is not dominated by any other rotation.
	\item Add a sink $t$ and an edge from every $v_i$ to $t$, such that $\rho_i$ does not dominate any other rotation.
\end{enumerate} 

Next, consider all pairs $bg$ that appears in the stable matchings of the given instance.
Ignore a pair $bg$ if it appears in all stable matchings. With each of the remaining pairs $bg$, 
we associate a directed path $P_{bg}$ in $G$ as follows:

\begin{itemize}
	\item 
	{\bf Case 1,} $bg \in M_0, bg \not \in M_z$: There exists a rotation $\rho_i$ that moves $b$ away from $g$. Choose $P_{bg}$ to be an arbitrary path in $G$ from $s$ to $v_i$.
	\item 
	{\bf Case 2,} $bg \in M_z, bg \not \in M_0$: There exists a rotation $\rho_i$ that moves $b$ to $g$. Choose $P_{bg}$ to be an arbitrary path in $G$ from $v_i$ to $t$.
	\item 
	{\bf Case 3,} $bg \not \in M_0, bg \not \in M_z$: There exist a rotation $\rho_i$ moving $b$ to $g$ and a rotation $\rho_j$ moving $b$ from $g$. By Lemma~\ref{lem:pre2}, $\rho_i$ dominates $\rho_j$, and therefore there is at least one path in $G$ from $v_i$ to $v_j$. Choose $P_{bg}$ to be an arbitrary such path.
\end{itemize}

Finally, we assign weights to the edges of $G$ as follows. Initialize all edge weights to 0.
Then, for each pair $bg$, we add $w_{bg}$ to the weights of all edges in $P_{bg}$. We also say 
that $w_{P_{bg}} = w_{bg}$

Clearly, an ideal cut in $G$ corresponds to a closed subset in $\Pi$. To be precise, for a non-empty vertex set $S$ such that 
$s \in S$ and there are no incoming edges to $S$, 
\[ C = \{ \rho_i: v_i \in S \setminus \{s\} \}\] 
is clearly a closed subset in $\Pi$.
We prove a simple yet crucial lemma:
\begin{lemma} \label{lem:corresponding}
	$S$ cuts $P_{bg}$ if and only if the matching generated by $C$ contains $bg$. 
\end{lemma}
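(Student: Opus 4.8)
The plan is to analyze the three cases of the path construction and, in each case, relate "the ideal cut $(S,\overline S)$ cuts $P_{bg}$" to a membership statement about the rotation(s) touching the pair $bg$, and then to translate that membership statement into "$bg$ is in the matching $M$ generated by $C$." Throughout I will use two facts freely: first, $S$ is a closed set of $\Pi$ corresponding to $C$, i.e.\ $C=\{\rho_i : v_i\in S\setminus\{s\}\}$ is downward-closed in the precedence order; and second, Lemma~\ref{lem:pre2}, which says there is at most one rotation moving $b$ to $g$ and at most one moving $b$ from $g$, with the former preceding the latter. The key elementary observation, used in every case, is that since $P_{bg}$ is a directed $s$--$t$ (or $s$--$v_i$, or $v_i$--$v_j$, or $v_i$--$t$) path in the DAG $G$ and $S$ is an ideal cut (no edges enter $S$, $s\in S$, $t\notin S$), the path crosses the cut \emph{at most once}, and it crosses exactly when its initial vertex lies in $S$ and its terminal vertex lies outside $S$. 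So "$S$ cuts $P_{bg}$" is equivalent to a simple containment condition on the endpoints of $P_{bg}$.

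Now I would carry out the case analysis. In Case 1 ($bg\in M_0$, $bg\notin M_z$), the path $P_{bg}$ runs from $s$ to $v_i$ where $\rho_i$ moves $b$ away from $g$; since $s\in S$ always, $S$ cuts $P_{bg}$ iff $v_i\notin S$, i.e.\ iff $\rho_i\notin C$. On the matching side, $M$ is obtained from $M_0$ by eliminating exactly the rotations in $C$; $b$ keeps his partner $g$ from $M_0$ precisely until the (unique) rotation $\rho_i$ that moves $b$ away from $g$ is eliminated, and no later rotation can restore $bg$ (by Lemma~\ref{lem:pre2} there is no rotation moving $b$ back to $g$ — if there were it would have to precede $\rho_i$, but then it would move $b$ to $g$ and $\rho_i$ moves him away, consistent, yet $bg\notin M_z$ rules out $b$ ending at $g$, and uniqueness of the rotation "from $g$" means $g$ is left only once). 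Hence $bg\in M$ iff $\rho_i\notin C$ iff $S$ cuts $P_{bg}$. Case 2 ($bg\in M_z$, $bg\notin M_0$) is the mirror image: $P_{bg}$ runs from $v_i$ to $t$ where $\rho_i$ moves $b$ to $g$; since $t\notin S$, $S$ cuts $P_{bg}$ iff $v_i\in S$ iff $\rho_i\in C$; and $b$ is matched to $g$ in $M$ iff the rotation bringing $b$ to $g$ has been eliminated (it is the last rotation touching $b$'s partnership, since $bg\in M_z$), i.e.\ iff $\rho_i\in C$. Case 3 ($bg\notin M_0$, $bg\notin M_z$): $P_{bg}$ runs from $v_i$ to $v_j$ with $\rho_i$ moving $b$ to $g$ and $\rho_j$ moving $b$ from $g$, and $\rho_i\prec\rho_j$; $S$ cuts $P_{bg}$ iff $v_i\in S$ and $v_j\notin S$, i.e.\ iff $\rho_i\in C$ and $\rho_j\notin C$ (note closedness of $C$ and $\rho_i\prec\rho_j$ make the case $\rho_i\notin C,\rho_j\in C$ impossible anyway, consistent with the at-most-one-crossing fact). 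On the matching side, $bg\in M$ iff, having eliminated exactly $C$ from $M_0$, $b$ has reached $g$ but not yet left, i.e.\ iff $\rho_i\in C$ and $\rho_j\notin C$ — here Lemma~\ref{lem:pre2} is what guarantees these are the \emph{only} two rotations that can create or destroy the pair $bg$, so the partnership $bg$ is "active" in $M$ exactly on the interval of eliminations between $\rho_i$ and $\rho_j$.

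The main obstacle, and the step I would write most carefully, is the precise justification that $bg\in M$ is governed \emph{solely} by the membership in $C$ of the one or two rotations identified by Lemma~\ref{lem:pre2}, and not by any subtlety in the order of elimination. The point to nail down is: $M$ depends only on the \emph{set} $C$, not on the topological order used to eliminate it (this is the standard closed-set/stable-matching correspondence, Lemma from the preliminaries), and a fixed pair $bg$ can be created at most once and destroyed at most once along any elimination sequence because "moves $b$ to $g$" and "moves $b$ from $g$" each pin down a unique rotation. Combining these, whether $bg\in M$ is a monotone threshold condition in $C$: present in $M_0$ and destroyed by $\rho_i\in C$ (Case 1), absent in $M_0$ and created by $\rho_i\in C$ with no destroyer ever (Case 2), or created by $\rho_i\in C$ and destroyed by $\rho_j\in C$ (Case 3). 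Matching this threshold condition against the endpoint-containment characterization of "$S$ cuts $P_{bg}$" from the first paragraph finishes the proof in all three cases. I would also remark explicitly that pairs $bg$ appearing in \emph{all} stable matchings were excluded from the construction, so every relevant pair falls into exactly one of the three cases, and that the choice of which path $P_{bg}$ was selected among the possibly many candidates is irrelevant, since the argument only uses the endpoints of $P_{bg}$, which are the same for every valid choice.
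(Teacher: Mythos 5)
Your proof is correct and follows essentially the same approach as the paper's: the at-most-one-crossing observation for directed paths across an ideal cut, followed by the three-case analysis translating "cut crosses $P_{bg}$" into membership of $\rho_i$ (and $\rho_j$) in $C$ and thence into $bg\in M$. The paper states the case equivalences more tersely without spelling out the rotation-theoretic justifications you give in your second paragraph, but the underlying argument is the same.
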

\begin{proof}
	We will use the following key observation:  for any pair $u,v$ of vertices in a DAG such that there exist paths from $u$ to $v$, an ideal cut separates $u$ and $v$ if and only if
 	it cuts each of these paths exactly one. We consider 3 cases:
	\begin{itemize}

		\item 
		{\bf Case 1,} $bg \in M_0, bg \not \in M_z$: There exists a unique rotation $\rho_i$ that moves $b$ away from $g$. $S$ cuts $P_{bg}$ if and only if $C$ does not contain $\rho_i$.
This happens if and only if the matching generated by $C$ contains $bg$.
		\item 
		{\bf Case 2,} $bg \not \in M_0, bg \in M_z $: There exists a unique rotation $\rho_i$ that moves $b$ to $g$. $S$ cuts $P_{bg}$ if and only if $C$ contains $\rho_i$.
This happens if and only if the matching generated by $C$ contains $bg$.
		\item 
		{\bf Case 3,} $bg \not \in M_0, bg \not \in M_z$: There exist a unique rotation $\rho_i$ moving $b$ to $g$ and a unique rotation $\rho_j$ moving $b$ from $g$. $S$ cuts $P_{bg}$ if and only if $C$ contains $\rho_i$ and does not contain $\rho_j$.
This happens if and only if the matching generated by $C$ contains $bg$.
	\end{itemize}
\end{proof}

\begin{theorem}
\label{thm:corresponding}
The maximum weight stable matchings in $I$ are in one-to-one correspondence with the maximum weight ideal cuts in $J$.
\end{theorem}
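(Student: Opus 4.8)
The approach is to show that the bijection between ideal cuts of $G$ and stable matchings of $I$ is weight-preserving up to a fixed additive constant, and then read off the theorem. The bijection is essentially already set up: any ideal cut $(S,\Sbar)$ of $G$ has $s\in S$ and $t\notin S$, and since $s$ dominates and $t$ is dominated by every $v_i$, the set $C=\{\rho_i : v_i\in S\setminus\{s\}\}$ is a closed subset of $\Pi$; conversely every closed subset of $\Pi$ arises this way, and closed subsets of $\Pi$ are in one-to-one correspondence with stable matchings of $I$. Write $M_S$ for the stable matching generated by $C$. What remains is to compare $w(S)$, the weight of the cut, with $\sum_{bg\in M_S}w_{bg}$, the weight of the matching.

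The key steps, in order, are as follows. First, expand the cut weight over the paths of the reduction: since $w_{uv}=\sum_{bg:\,uv\in P_{bg}}w_{bg}$ (the sum over pairs $bg$ lying in some but not all stable matchings),
\[ w(S)=\sum_{uv:\,u\in S,\,v\notin S}w_{uv}=\sum_{bg}w_{bg}\cdot\bigl|\{uv\in P_{bg}:u\in S,\,v\notin S\}\bigr|. \]
Second, apply the observation already used in the proof of Lemma~\ref{lem:corresponding}: because $P_{bg}$ is a directed path and the ideal cut $(S,\Sbar)$ admits no edge entering $S$, the path $P_{bg}$ meets the cut at most once, and meets it exactly once precisely when $S$ cuts $P_{bg}$; hence the count in the displayed sum is the $0/1$ indicator of ``$S$ cuts $P_{bg}$''. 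Third, invoke Lemma~\ref{lem:corresponding} to identify this with the indicator of $bg\in M_S$, so that $w(S)=\sum_{bg\in M_S,\ bg\text{ not in all stable matchings}}w_{bg}$. Fourth, add back the constant $K=\sum_{bg\text{ in all stable matchings}}w_{bg}$ (the common contribution of the ignored pairs) to get $\sum_{bg\in M_S}w_{bg}=w(S)+K$ with $K$ independent of $S$. It follows that $S$ maximizes the cut weight over ideal cuts of $G$ iff $M_S$ maximizes $\sum_{bg\in M}w_{bg}$ over stable matchings, so the bijection $S\leftrightarrow M_S$ restricts to a bijection between maximum weight ideal cuts in $J$ and maximum weight stable matchings in $I$.

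I expect the only delicate point to be the counting step: one must be sure that each path $P_{bg}$ contributes its weight to the cut exactly once — never twice, never zero times — precisely when $bg\in M_S$. This is exactly what the ``at most once'' crossing property of ideal cuts along directed paths provides, in combination with Lemma~\ref{lem:corresponding}; given that, the interchange of summation and the harmlessness of the additive constant $K$ (which is the same for every stable matching) are routine bookkeeping.
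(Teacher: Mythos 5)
Your proposal follows the same route as the paper: expand the cut weight as a double sum over edges crossing the cut and over the paths $P_{bg}$ containing those edges, interchange the order of summation, use the ``ideal cut meets a directed path exactly once or not at all'' observation together with Lemma~\ref{lem:corresponding} to turn the inner count into the indicator of $bg\in M_S$, and conclude that maximizers correspond to maximizers. The one place you are actually a bit more careful than the paper is the additive constant $K$: the paper's displayed chain ends with $w(S)=\sum_{bg\in\text{matching generated by }C}w_{bg}$, which is not literally correct because pairs appearing in \emph{every} stable matching were ignored in the reduction and contribute nothing to any edge weight of $G$; the true identity is $\sum_{bg\in M_S}w_{bg}=w(S)+K$ with $K=\sum_{bg\text{ in all stable matchings}}w_{bg}$. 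Since $K$ is independent of $S$, this does not affect which cuts/matchings are optimal, so the theorem still holds, but your explicit bookkeeping closes a small gap that the paper elides.
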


\begin{proof}
	We show that the weight of an ideal cut generated my $S$ is equal to the weight of the 
	matching generated by $C$. By Lemma~\ref{lem:corresponding}, 
	\begin{align*}
		w(S) &= \sum_{e = uv: u \in S, v \not \in S} w_e 
		     = \sum_{e = uv: u \in S, v \not \in S} ~ \sum_{e \in P_{bg}} w_{bg} \\
		     &= \sum_{S \text{ cuts } P_{bg}} w_{bg} 
		     = \sum_{bg \in \text{ the matching generated by } C}  w_{bg}.
	\end{align*}
	The theorem follows.
\end{proof} 

\subsection{The sublattice, and using meta-rotations to traversing it}

 By Theorem \ref{thm.max-wt} and Theorem~\ref{thm:corresponding} we get:

\begin{lemma}
	If $M$ and $M'$ are maximum stable matchings in $\Mc$ then so are $M \vee M'$ and $M \wedge M'$.
\end{lemma}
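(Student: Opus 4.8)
The plan is to transport the lattice structure on maximum weight ideal cuts (Theorem~\ref{thm.max-wt}) through the bijection established in Theorem~\ref{thm:corresponding}. Recall that the reduction builds, from the instance $I$, an edge-weighted DAG $G$ whose vertices (apart from $s$ and $t$) are the rotations of $\Pi$, and that a maximum weight ideal cut $(S,\Sbar)$ of $G$ corresponds to the stable matching $M$ generated by the closed set $C = \{\rho_i : v_i \in S \setminus \{s\}\}$. So given two maximum weight stable matchings $M, M'$ in $\Mc$, let $S, S'$ be the vertex sets of the corresponding maximum weight ideal cuts, and let $C, C'$ be the associated closed subsets of $\Pi$. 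By Lemma~\ref{lem.max-wt}, $S \cup S'$ and $S \cap S'$ also define maximum weight ideal cuts; hence by Theorem~\ref{thm:corresponding} the matchings they generate are again maximum weight stable matchings.

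The key step is then to identify the matching generated by $S \cup S'$ with $M \vee M'$ (equivalently, $M \wedge M'$ in the boy-ordering convention, depending on which direction of domination one fixes) and the matching generated by $S \cap S'$ with the other. First I would observe that $S \mapsto C$ is an order isomorphism between ideal cuts of $G$ and closed subsets of $\Pi$, taking $\cup$ to $\cup$ and $\cap$ to $\cap$, since it is just deletion of the element $s$. Next, I would invoke the standard correspondence between closed subsets of the rotation poset and the stable-matching lattice $\Lc$: the map sending a closed set to the matching obtained from $M_0$ by eliminating exactly those rotations is a lattice isomorphism, so union of closed sets corresponds to the join (applying more rotations moves down the lattice toward $M_z$) and intersection corresponds to the meet. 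Composing, the matching generated by $C \cup C'$ is $M \vee M'$ and the one generated by $C \cap C'$ is $M \wedge M'$ (with the appropriate orientation). Since both $S\cup S'$ and $S\cap S'$ are maximum weight cuts, both $M \vee M'$ and $M \wedge M'$ lie in $\Mc$, which is exactly the claim.

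The main obstacle, such as it is, is bookkeeping rather than mathematical depth: one must be careful that the closed-subset-to-matching map really is a lattice homomorphism in the precise sense needed — that eliminating the union of two closed sets of rotations yields the join of the two matchings. This is a classical fact about the rotation poset (it is implicit in the one-to-one correspondence recalled after Lemma~\ref{lem:pre2}, together with the fact that applying more rotations moves one downward in $\Lc$), so I would cite it rather than reprove it, but I would state it explicitly to make the chain of isomorphisms $\Mc \cong \{\text{max wt ideal cuts of }G\} \xrightarrow{\,\cup,\cap\,}$ transparent. A secondary point worth a sentence is to confirm that $S \cup S'$ and $S \cap S'$ are genuinely ideal cuts of $G$ (they contain $s$, exclude $t$, and have no incoming edges) — but this is immediate and is already subsumed in Lemma~\ref{lem.max-wt}. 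Everything else is routine.
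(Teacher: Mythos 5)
Your proof is correct and takes essentially the same approach as the paper: the paper's own proof is literally the one-line invocation of Theorem~\ref{thm.max-wt} and Theorem~\ref{thm:corresponding}, leaving implicit exactly the bookkeeping you spell out --- that the composite bijection $S\mapsto C\mapsto M$ is a lattice isomorphism taking $\cup$ to $\vee$ and $\cap$ to $\wedge$, which is the classical closed-set/stable-matching correspondence. If anything, your version is more careful than the original, which silently relies on the reader knowing that the correspondence respects the lattice operations.
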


This gives:

\begin{theorem}
	\label{thm:sublattice}
	The set of maximum weight stable matchings forms a sublattice $\Lcc$ of the lattice $\Lc$.
\end{theorem}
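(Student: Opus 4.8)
The plan is to deduce the theorem directly from the lattice structure already established for maximum weight ideal cuts (Theorem~\ref{thm.max-wt}) together with the correspondence theorem (Theorem~\ref{thm:corresponding}), via the bridge supplied by the preceding lemma. First I would recall the standard fact (from the distributive lattice $\Lc$) that the dominance order on stable matchings corresponds order-isomorphically to the inclusion order on closed subsets of the rotation poset $\Pi$; under this correspondence, if $M$ is generated by closed set $C_M$ and $M'$ by $C_{M'}$, then $M \wedge M'$ is generated by $C_M \cup C_{M'}$ and $M \vee M'$ is generated by $C_M \cap C_{M'}$. (One must double-check the direction: $M \wedge M'$ is the boy-better matching, hence dominates, hence is reached by fewer rotations from $M_0$ — wait, it is reached by the rotations common to \emph{neither}\dots actually $M \wedge M'$ dominates both, so its closed set is $C_M \cap C_{M'}$, and $M \vee M'$ has closed set $C_M \cup C_{M'}$; I would state this carefully citing \cite{GusfieldI}.) Correspondingly, in the DAG $G$ built in Section~\ref{sec:reduction}, the ideal cut set $S_M = \{s\} \cup \{v_i : \rho_i \in C_M\}$ satisfies $S_{M \vee M'} = S_M \cup S_{M'}$ and $S_{M \wedge M'} = S_M \cap S_{M'}$ (adjusting the union/intersection labels to match whichever convention the closed-set correspondence dictates).

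The key step is the invoked lemma: if $M, M' \in \Mcc$ are maximum weight stable matchings, then by Theorem~\ref{thm:corresponding} the corresponding cuts $S_M, S_{M'}$ are maximum weight ideal cuts in $J$; by Theorem~\ref{thm.max-wt} (via Lemma~\ref{lem.max-wt}), $S_M \cup S_{M'}$ and $S_M \cap S_{M'}$ are \emph{also} maximum weight ideal cuts; these are exactly the cuts corresponding to $M \vee M'$ and $M \wedge M'$ (in the appropriate order); hence applying Theorem~\ref{thm:corresponding} again, $M \vee M'$ and $M \wedge M'$ are maximum weight stable matchings, i.e.\ they lie in $\Mcc$. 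This establishes the lemma. The theorem then follows immediately: $\Lcc$ is the subset $\Mcc$ of $\Lc$ equipped with the restrictions of $\wedge$ and $\vee$; since $\Mcc$ is closed under both operations (by the lemma) and inherits the partial order $\preceq$ from $\Lc$, it is a sublattice. I would also note that $\Lcc$ is nonempty (it contains at least one maximum weight stable matching, which exists since there are finitely many stable matchings), so it is a genuine lattice with its own top and bottom elements — foreshadowing the boy-optimal and girl-optimal elements of $\Mcc$ discussed next.

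The main obstacle — really the only subtle point — is bookkeeping the correspondence between the two lattice operations and set union/intersection: one must verify that the union of closed sets corresponds to the meet (boy-optimal side) and the intersection to the join, and that this is preserved under the vertex-set map $S_M = \{s\} \cup \{v_i : \rho_i \in C_M\}$ into ideal cuts of $G$, so that Lemma~\ref{lem.max-wt}'s statement about $S \cup S'$ and $S \cap S'$ lands on the right matchings. This is routine once the conventions from the preliminaries are pinned down, but it is where a careless proof could go wrong. Everything else is a direct chaining of the already-proven correspondence and lattice theorems, so I expect the write-up to be short.
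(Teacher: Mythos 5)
Your proposal is correct and matches the paper's own argument: the paper derives the intermediate lemma (closure of $\Mcc$ under $\vee$ and $\wedge$) directly from Theorem~\ref{thm.max-wt} and Theorem~\ref{thm:corresponding}, leaving implicit exactly the bookkeeping you spell out — that $M \wedge M'$ and $M \vee M'$ correspond to $C_M \cap C_{M'}$ and $C_M \cup C_{M'}$ respectively, and hence to $S_M \cap S_{M'}$ and $S_M \cup S_{M'}$ in the cut picture. Your careful pinning-down of the union/intersection directions (including catching and correcting your own momentary confusion) is right, and the rest is the same chain of already-established results.
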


We next give the notion of a meta-rotation. These help traverse the sublattice $\Lcc$ in the 
same way that rotations help traverse the lattice $\Lc$.
Let $R$ be the set of all rotations used in the rotation poset $\Pi$. 
Let $G$ be the graph obtained from $\Pi$ by adding vertices $s$ and $t$ and 
assigning weights to edges, as described in Section \ref{sec:reduction}. 
Let $D$ be the DAG constructed in
Theorem~\ref{thm:idealcutdag}; ideal cuts in $D$ correspond to a maximum weight ideal cuts in $G$.
A vertex, $v$, in $D$ corresponds to a set of vertices in $G$. Hence we may view $v$ as 
a subset of the rotations in $R$; clearly, the subsets represented by the set of all vertices in $D$ 
form a partition of $R$.

By analogy with the rotation poset $\Pi$, let us represent $D$ by $\Pibar$ and call it
the {\em meta-rotation poset}.
Each vertex in $\Pibar$ (and $D$) is a subset of $R$ and is called a {\em meta-rotation}.
Let $S$ be the element in $\Pibar$ containing $s$, and $T$ be the element in $\Pibar$ 
containing $t$.
For any closed subset, $P$, of $\Pibar$, let $R_P$ be the set of all rotations 
contained in the meta-rotations of $P$.
Eliminating these rotations starting from $M_0$, according to the topological ordering of the rotations given in $\Pi$, we arrive at a maximum weight stable matching, say $M_P$.
In this manner, the meta-rotations help us traverse the sublattice. Combining with
Proposition~\ref{prop:flowAlg} and Theorem~\ref{thm:corresponding}, we get:

\begin{theorem}
\label{thm.comb}
There is a combinatorial polynomial time algorithm for finding a
maximum weight stable matching. The meta-rotation poset
	$\Pibar$ can also be constructed in polynomial time. Each closed subset of $\Pibar$ containing $S$ and not containing $T$ generates a maximum weight stable matching.
\end{theorem}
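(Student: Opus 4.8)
The plan is to assemble the theorem from the machinery already developed, treating each of its three assertions in turn and invoking the appropriate earlier result. First, for the existence of a combinatorial polynomial time algorithm for maximum weight stable matching: given instance $I$, compute the rotation poset $\Pi$ in polynomial time (Lemma~\ref{lem:computePoset}), build the edge-weighted DAG $G$ by adding $s$ and $t$ and assigning weights along the paths $P_{bg}$ exactly as in Section~\ref{sec:reduction}, then run the combinatorial flow algorithm of Figure~\ref{alg:flow} (Proposition~\ref{prop:flowAlg}) to obtain an optimal flow $\ff$ for~(\ref{flow-lp}). From the set $T$ of vertices reachable from $t$ in $G_{\ff}$ (as in the proof of Lemma~\ref{lem:optimality}) one reads off a maximum weight ideal cut $(S,\Sbar)$ in $G$ with $S = V \setminus T$, and by Theorem~\ref{thm:corresponding} the matching $M$ obtained by eliminating the rotations in $C = \{\rho_i : v_i \in S \setminus \{s\}\}$ starting from $M_0$ is a maximum weight stable matching. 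Every step here — poset construction, path assignment, feasible-flow routine, max-flow computation, SCC/reachability — is combinatorial and runs in time polynomial in $n$ since $\card{R} = O(n^2)$ by Lemma~\ref{lem:computePoset} and the edge weights are polynomially bounded rationals.

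Second, for the polynomial time construction of $\Pibar$: after computing the optimal flow $\ff$, form the residual graph $G_{\ff}$, compute its strongly connected components, and contract each to a single vertex; this yields the DAG $D = \Pibar$ of Theorem~\ref{thm:idealcutdag}, all in polynomial time. Each vertex of $\Pibar$ is, by construction, a subset of $V(G) \setminus \{s,t\}$ together with possibly $s$ or $t$, hence identifiable with a subset of $R$ (meta-rotation); since the SCCs partition $V(G_{\ff})$, the meta-rotations partition $R$, as claimed in the discussion preceding the theorem.

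Third, and this is the substantive point, I must verify that each closed subset $P$ of $\Pibar$ containing $S$ (the meta-rotation holding $s$) and not containing $T$ (the one holding $t$) generates a maximum weight stable matching. By Theorem~\ref{thm:idealcutdag}, ideal cuts in $D$ correspond bijectively to maximum weight ideal cuts in $G$; a closed subset $P$ of $\Pibar$ with $S \in P$, $T \notin P$ is precisely an ideal cut in $D$ (here I would spell out that "closed subset containing the $s$-component and not the $t$-component" and "ideal cut of the DAG $D$" are the same object, using the definition of closed/ideal from Section~\ref{sec:cuts}). Pulling this ideal cut back through the contraction gives a maximum weight ideal cut $(S_P, \overline{S_P})$ in $G$, where $S_P = \bigcup_{v \in P} v$ viewed inside $V(G)$, which equals $\{s\} \cup \{v_i : \rho_i \in R_P\}$. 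Then $R_P$ is exactly the closed subset $C$ of $\Pi$ corresponding to this cut, and Theorem~\ref{thm:corresponding} identifies the matching $M_P$ obtained by eliminating the rotations of $R_P$ from $M_0$ (in any topological order of $\Pi$, which is well-defined since $R_P$ is closed in $\Pi$) as a maximum weight stable matching.

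The main obstacle I anticipate is not any single hard inequality but rather the bookkeeping of the three-layer correspondence — closed subsets of $\Pibar$ $\leftrightarrow$ ideal cuts of $D$ $\leftrightarrow$ maximum weight ideal cuts of $G$ $\leftrightarrow$ closed subsets of $\Pi$ that are "maximal-weight-reachable" $\leftrightarrow$ maximum weight stable matchings — and making sure that contracting SCCs does not destroy the property that the pulled-back set is closed in $\Pi$. Concretely, one needs that whenever a meta-rotation $\mu$ lies in a closed subset $P$ of $\Pibar$, then for every rotation $\rho \in \mu$ all $\Pi$-predecessors of $\rho$ lie in $\bigcup_{\nu \in P}\nu$; this follows because an edge of $\Pi$ from a predecessor $\rho'$ of $\rho$ either stays inside the SCC containing $\rho$ (so $\rho' \in \mu$) or projects to an edge of $D$ into $\mu$'s vertex, forcing the source meta-rotation into $P$ by closedness. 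I would state this as a short lemma and dispatch it before concluding.
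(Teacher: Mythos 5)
Your proof is correct and takes essentially the same approach as the paper, which treats the theorem as a direct corollary of Proposition~\ref{prop:flowAlg}, Theorem~\ref{thm:idealcutdag}, and Theorem~\ref{thm:corresponding} together with the discussion of meta-rotations preceding the statement. The extra bookkeeping you supply---verifying that a closed subset of $\Pibar$ pulls back to a closed subset of $\Pi$---is a legitimate and correct check, though the paper leaves it implicit in the bijection of Theorem~\ref{thm:idealcutdag}.
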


The running time of the algorithm described above is dominated by the time required to find a
max-flow in the graph obtained from $\Pi$ which has $O(n^2)$ vertices.

\subsection{Further applications of the structure}

\subsubsection{Finding boy-optimal and girl-optimal matchings in $\Lcc$}

Notice that for two closed subsets $C$ and $C'$, the matching generated by $C$ dominates the matching generated by $C'$ if $C \subset C'$. Hence we have:

\begin{lemma}
	The closed subset containing only the meta-rotation $S$ generates the boy-optimal stable 
	matching and the one
	containing all meta-rotations other than $T$ generates the girl-optimal stable 
	matching in the sublattice $\Lcc$.
\end{lemma}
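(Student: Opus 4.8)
The plan is to read off the conclusion by combining the structural correspondence of Theorem~\ref{thm.comb} with the monotonicity observation stated just before the lemma, once the extreme closed subsets of the meta-rotation poset $\Pibar$ have been identified. The first step I would carry out is to pin down the shape of $\Pibar$. In the construction of $G$ (Section~\ref{sec:reduction}) the source $s$ dominates every rotation-vertex and $t$ is dominated by every rotation-vertex, so every vertex of $G$ is reachable from $s$ and reaches $t$ in $G$, hence also in the residual graph $G_{\ff}$, which contains all edges of $G$. Contracting the strongly connected components of $G_{\ff}$ to form $D = \Pibar$, it follows that in $\Pibar$ every meta-rotation is reachable from the meta-rotation $S$ (the one containing $s$) and reaches the meta-rotation $T$ (the one containing $t$). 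Thus $S$ is the unique source of the DAG $\Pibar$ and $T$ its unique sink; in particular $S \ne T$, since otherwise there would be a $t$-$s$ path in $G_{\ff}$, contradicting optimality of $\ff$ by Lemma~\ref{lem:optimality}. Consequently $\{S\}$ and $\Pibar \setminus \{T\}$ are both closed subsets of $\Pibar$ that contain $S$ and not $T$, and every closed subset $P$ of $\Pibar$ with $S \in P$ and $T \notin P$ satisfies $\{S\} \subseteq P \subseteq \Pibar \setminus \{T\}$.

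The second step is to apply monotonicity. For closed subsets $P \subseteq P'$ of $\Pibar$, the corresponding rotation sets satisfy $R_P \subseteq R_{P'}$, and each $R_P$ is a closed subset of the rotation poset $\Pi$ because an ideal cut in $G$ always yields a closed subset of $\Pi$. Hence, by the observation that a smaller closed subset of $\Pi$ generates a matching that dominates the one generated by the larger subset, $M_P \preceq M_{P'}$. Taking $P = \{S\}$ against an arbitrary valid $P'$ shows that $M_{\{S\}}$ dominates every maximum weight stable matching, and taking $P' = \Pibar \setminus \{T\}$ against an arbitrary valid $P$ shows that $M_{\Pibar \setminus \{T\}}$ is dominated by every maximum weight stable matching. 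Since by Theorem~\ref{thm.comb} the maximum weight stable matchings are exactly the $M_P$ for closed subsets $P$ of $\Pibar$ containing $S$ and not $T$, and $\Lcc$ is a sublattice (Theorem~\ref{thm:sublattice}), the matching $M_{\{S\}}$ is the $\preceq$-minimum of $\Lcc$, i.e. the boy-optimal matching of $\Lcc$, and $M_{\Pibar \setminus \{T\}}$ is the $\preceq$-maximum, i.e. the girl-optimal matching of $\Lcc$.

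I expect the main obstacle to be the bookkeeping in the first step, namely checking that $\{S\}$ and $\Pibar \setminus \{T\}$ really are legitimate closed subsets inducing ideal cuts in $D$: this rests on $S$ and $T$ being, respectively, the unique source and unique sink of $D$, which in turn rests on the reachability property of $G$ being preserved when one passes to the residual graph $G_{\ff}$ and contracts its strongly connected components. Once that is secured, the monotonicity step and the identification with the boy- and girl-optimal matchings of $\Lcc$ are immediate from the already-established correspondence and from the lattice structure of $\Lcc$.
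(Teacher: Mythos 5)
Your proof is correct and takes essentially the same approach as the paper: the paper deduces the lemma directly from the monotonicity observation (a smaller closed subset generates a dominating matching), and your proposal simply makes explicit the verification—that $S$ is the unique source and $T$ the unique sink of $\Pibar$, so $\{S\}$ and $\Pibar \setminus \{T\}$ are the minimal and maximal closed subsets containing $S$ and not $T$—which the paper leaves implicit.
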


\subsubsection{Bi-objective stable matching}

In the {\em bi-objective stable matching problem} we are given sets $B$ and $G$, of $n$ boys and
$n$ girls and, for each boy and each girl,
a complete preference ordering over all agents of the opposite sex. 
However, unlike the maximum weight stable matching problem,
we are given two weight functions
$\ww^{(1)},\ww^{(2)}: \ B \times G \rightarrow \mathbb{R}$. The problem is to find a stable matching $M$ that 
maximizes 
$ \sum_{bg \in M} w^{(2)}_{bg} $
among the ones maximizing  
$\sum_{bg \in M} w^{(1)}_{bg}.$

To solve this problem, first we find a poset $\Pibar$ that generates the set of stable matchings maximizing $\sum_{bg \in M} w^{(1)}_{bg}$.
Then we form a maximum ideal cut instance in the same way as described in section~\ref{sec:reduction} with respect to $\ww^{(2)}$. 
Let $G$ be the DAG in the instance. 
For each meta-rotation $V$ in $\Pibar$, contract all vertices 
in $G$
corresponding to the rotations in $V$.
Let $\overline{G}$ be the resulting graph.
By a similar argument to the one in Section~\ref{sec:reduction}, we have: 
\begin{lemma}
The maximum weight ideal cuts in $\overline{G}$ are in one-to-one correspondence with the solutions of bi-objective stable matching problem.
\end{lemma}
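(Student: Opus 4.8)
The plan is to mimic the reduction of Section~\ref{sec:reduction} one level up, treating meta-rotations exactly as rotations were treated there. First I would recall the setup: $\Pibar$ is the meta-rotation poset whose closed subsets containing $S$ and not containing $T$ are in bijection with the stable matchings of $I$ that maximize $\sum_{bg\in M} w^{(1)}_{bg}$ (Theorem~\ref{thm.comb}). Call this set of matchings $\Mc_1$. The graph $\overline G$ is obtained from the edge-weighted DAG $G$ (built from $\Pi$ with respect to $\ww^{(2)}$) by contracting, for each meta-rotation $V$, all vertices of $G$ corresponding to rotations in $V$; since the meta-rotations partition $R$, this is well-defined, and one checks that the contracted graph is acyclic (the meta-rotations are the vertices of the DAG $D$, so contracting along them cannot create a cycle) with source the image of $s$ and sink the image of $t$, and that the edge-weights descend to $\overline G$ by summing weights of edges mapped to the same edge (parallel edges may be merged by adding weights).

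Next I would establish the analogue of Lemma~\ref{lem:corresponding}: an ideal cut $(\overline S,\overline{\overline S})$ in $\overline G$ pulls back to an ideal cut $(S',\overline{S'})$ in $G$ in which every meta-rotation lies entirely on one side; the corresponding closed subset $C$ of $\Pi$ is a union of meta-rotations, hence corresponds to a closed subset $P$ of $\Pibar$, hence to a matching $M_P \in \Mc_1$. Conversely every $M\in\Mc_1$ arises this way. Then, exactly as in Lemma~\ref{lem:corresponding} and the proof of Theorem~\ref{thm:corresponding}, $\overline S$ cuts the (image of the) path $P_{bg}$ if and only if $M_P$ contains $bg$, so the $\ww^{(2)}$-weight of the ideal cut equals $\sum_{bg\in M_P} w^{(2)}_{bg}$. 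Applying Theorem~\ref{thm:corresponding}'s argument verbatim to $\overline G$ gives that maximum weight ideal cuts in $\overline G$ correspond bijectively to the matchings in $\Mc_1$ maximizing $\sum_{bg\in M} w^{(2)}_{bg}$, which is precisely the set of solutions of the bi-objective problem.

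The main obstacle I anticipate is purely bookkeeping: verifying that contracting $G$ along meta-rotations genuinely yields a legitimate instance of the maximum weight ideal cut problem — in particular that $\overline G$ remains a DAG with the property that every vertex lies on some $s$–$t$ path, and that ideal cuts of $\overline G$ correspond exactly to those ideal cuts of $G$ that respect the meta-rotation partition (equivalently, to ideal cuts of $D$, i.e.\ to matchings in $\Mc_1$). Once the paths $P_{bg}$ and their images in $\overline G$ are tracked correctly, the weight computation is identical to the one already carried out in the proof of Theorem~\ref{thm:corresponding}, so no new combinatorial idea is needed beyond the "iterate the construction" observation.
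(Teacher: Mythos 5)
Your proposal is correct and follows the same route the paper indicates (``by a similar argument to the one in Section~\ref{sec:reduction}''), and it usefully fills in details the paper leaves implicit. The one place you wave your hands is the parenthetical ``(equivalently, to ideal cuts of $D$)'' --- this equivalence is not automatic and is the crux of why the construction works. A priori, ideal cuts of $\overline G$ (which only respect images of the original edges of $G$) could be a strict superset of ideal cuts of $D$ (which must also respect the reverse residual edges of $G_{\ff}$), and then a maximum weight ideal cut of $\overline G$ might correspond to a matching that is \emph{not} in $\Mc_1$.

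The missing observation that closes this is: if an original edge $uv$ has slack in the optimal $w^{(1)}$-flow, i.e.\ $f_{uv}>w^{(1)}_{uv}$, then $G_{\ff}$ contains both $uv$ (original edges are retained with infinite capacity) and the residual reverse edge $vu$, so $u$ and $v$ lie in the same strongly connected component of $G_{\ff}$ and hence in the same meta-rotation. Therefore every slack edge of $G$ is contracted to a self-loop in $\overline G$, and the surviving edges of $\overline G$ are exactly the tight edges --- which, after contraction, are precisely the edges of $D$. Thus $\overline G$ and $D$ have identical vertex and edge sets, the two notions of ideal cut coincide, and the weight computation via $P_{bg}$ then goes through exactly as in Theorem~\ref{thm:corresponding}, as you describe. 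With this one lemma added, your argument is complete.
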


	\bibliographystyle{alpha}
	\bibliography{refs}
\end{document}